\newcommand{\Exp}{\mathbbm{E}}
\newcommand{\Ex}[1]{\mathbbm{E}\left [#1 \right]} 
\newcommand{\E}{\mathcal{E}}
\DeclareMathOperator{\eqlaw}{\stackrel{\mathcal{L}}{=}}
\newcommand{\erf}{\text{erf}}
\newcommand{\R}{\mathbbm{R}}
\newcommand{\N}{\mathbbm{N}}
\newcommand{\F}{\mathcal{F}}
\newcommand{\M}{\mathcal{M}}
\newcommand{\C}{\mathcal{C}}
\newcommand{\Ct}{\mathcal{C}_{\tau}}
\journalname{Journal of Statistical Physics}
\begin{document}
	\title{Limits and dynamics of stochastic neuronal networks with random heterogeneous delays \thanks{CNRS UMR 7241m INSERM U1050, Universit\'e Pierre et Marie Curie ED 158. MEMOLIFE Laboratory of excellence and Paris Sciences Lettres PSL*.}
	}

	\author{Jonathan Touboul}

	\institute{ The Mathematical Neuroscience Laboratory \\
				Coll\`ege de France / CIRB and INRIA Bang Laboratory\\
				11, place Marcelin Berthelot,
				75005 Paris, France
	            Tel.: +33-144271388\\
	            \email{jonathan.touboul@college-de-france.fr}           
	}

	\date{Received: date / Accepted: date}


%
%
%
%

\maketitle

\begin{abstract}
Realistic networks display a heterogeneous distribution of transmission delays. We analyze here the limits of large stochastic multi-populations networks with stochastic coupling and random interconnection delays. We show that depending on the nature of the delays distributions, a quenched or averaged propagation of chaos takes place in these networks, and that the network equations converge towards a delayed McKean-Vlasov equation with distributed delays. Our approach is mostly developed for neuroscience applications. We instantiate in particular a classical neuronal model, the Wilson and Cowan system, and show that the obtained limit equations have Gaussian solutions whose mean and standard deviation satisfy a closed set of coupled delay differential equations in which the distribution of delays and the noise levels appear as parameters. This allows to uncover precisely the effects of noise, delays and coupling on the dynamics of such heterogeneous networks, in particular their role in the emergence of synchronized oscillations. We show in several examples that not only the averaged delay, but also the dispersion, govern the dynamics of such networks.
\keywords{Heterogeneous neuronal networks \and  Mean-field limits \and Delay differential equations \and Bifurcations}
\PACS{87.19.ll \and 
87.19.lc \and 
87.18.Sn \and 
87.19.lm \and 
}
\end{abstract}






\section*{Introduction}
Realistic networks show heterogeneous interconnection delays. The cortex is a paradigmatic example of such networks. In the brain, neurons form large-scale assemblies (termed neural populations) subject to an intense noise. They communicate through the emission of spikes (action potentials) that are transmitted through synapses that can be chemical or, less often, electrical. The transport of spikes through the axons takes a specific time, function of the length of the axon and of the time taken for the signal to be transmitted through synapses, and therefore are highly heterogeneous. Moreover, chemical synapses modify the membrane potential of the post-synaptic neuron through a complex mechanism of release and binding of neurotransmitters. This mechanism takes a specific time, and results in random variations of the current transmitted through the different synapses and also from spike to spike.

Models describing the emergent behavior arising from the interaction of neurons in large-scale networks largely overlooked both phenomena of heterogeneous delays and stochastic synapses. Most models have relied on continuum limits ever since the seminal work of Wilson and Cowan and Amari \cite{amari:72,amari:77,wilson-cowan:72,wilson-cowan:73}. Such models tend to represent the activity of the network through a macroscopic variable, the population-averaged firing rate, that is generally assumed to be deterministic. Many analytical and numerical results and properties have been derived from these equations and related to cortical phenomena, for instance for the problem of  spatio-temporal pattern formation in spatially extended models (see e.g.~\cite{coombes-owen:05,ermentrout:98,ermentrout-cowan:79,laing-troy-etal:02}). This approach tends to implicitly make the assumption that the effect of noise and heterogeneities vanishes in large populations. A different approach has been to study regimes where the activity is uncorrelated. A number of computational studies on the integrate-and-fire neuron showed that under certain conditions neurons in large assemblies end up firing asynchronously, producing null correlations \cite{abbott-van-vreeswijk:93,amit-brunel:97,brunel-hakim:99}. In these regimes, the correlations in the firing activity decrease towards zero in the limit where the number of neurons tends to infinity. The emergent global activity of the population in this limit is deterministic, and evolves according to a mean-field firing rate equation. However, these states arise in sparsely connected networks which is not the case of cortical columns. In order to go beyond asynchronous states and take into account the stochastic nature of the firing and how this activity scales as the network size increases, different approaches have been developed, such as the population density method and related approaches \cite{cai-tao-etal:04}. Most of these approaches involve expansions in terms of the moments of the resulting random variable, and the moment hierarchy needs to be truncated which is a quite hard task that can raise a number of technical issues (see e.g.\cite{ly-tranchina:07}). Homogeneous spatially extended systems of neurons were recently analyzed using the theory of probability~\cite{touboulNeuralfields:11,touboulNeuralFieldsDynamics:11}. These were the first studies considering individual delays between neurons. In these studies, the propagation delays between neurons were considered constant, and in these cases, no specific delay-induced transition was identified. Heterogeneous networks have been considerably less analyzed. One notable exception is the work of Sompolinsky and collaborators~\cite{sompolinsky-crisanti-etal:88} recently analyzed from a mathematical viewpoint using large deviations techniques~\cite{cabana-touboul:12}, where the authors considered random heterogeneous connections between neurons.

The main question we shall address in this  article is the effect of heterogeneous delays on the macroscopic dynamics of such neuronal areas in the presence of stochastic coupling. Although it now firmly established that delays are fundamental for shaping the activity of large-scale neuronal networks~\cite{roxin-brunel-etal:05,coombes-laing:11,roxin-montbrio:11,series-fregnac:02}, very little is understood about the type of delays involved in such networks. The question of how these individual delays translate in macroscopic limits, and how the heterogeneity in the distribution of delays affect macroscopic behaviors, hence has deep implications, and is still largely open. At the microscopic scale, it is relatively clear that transmission delays are well approximated by constant delay depending on the distance between neurons and on the typical synaptic time scale. These constant delays hence vary depending on the pair of neurons considered and are related to the topology of the network in the physical space.

Considering random heterogeneous delays between neurons, we establish the limit, as the number of neurons tend to infinity, of the network equations. We will show that the propagation of chaos property occurs in these systems, and that the asymptotic macroscopic behavior in the limit where the number of neurons tends to infinity is described by a McKean-Vlasov equation with distributed delays, the delay measure being exactly the statistical distribution of individual delays.
The dynamics of the obtained McKean-Vlasov equations with distributed delays is very hard to analyze for general neuron models. In order to uncover the macroscopic dynamics of the neurons and their dependence on the distribution of delays, we eventually instantiate a particularly suitable model, the classical firing-rate model. In that case, that solutions are Gaussian, and the mean and standard deviation satisfy a closed system of delay differential equations with distributed delays. This reduction allows using classical tools of the domain of infinite-dimensional dynamical systems to show generic qualitative transitions between different qualitative states, as a function of the delays and of the variance of the synaptic weights.

\section{Setting of the problem}\label{sec:Setting}

In this section, we describe the mathematical formalism used throughout the manuscript. We shall analyze the dynamics of a relatively general neuron model, valid for most usual models used in computational neuroscience such as the Hodgkin-Huxley~\cite{hodgkin-huxley:52} or Fitzhugh-Nagumo~\cite{fitzhugh:69} models. The state of each neuron $i$ in the network is classically described by a $d$-dimensional variable $X^i\in E:=\R^d$, typically corresponding to the membrane potential of the neuron and possibly additional variables such as those related to ionic concentrations and gated channels (see e.g.~\cite{ermentrout-terman:10b}).

We consider networks composed of a finite number of populations $\alpha \in \{1, \cdots, P\}$. The populations differ through the intrinsic properties of each neurons and the input they receive. We assume that the number of neuron in each population $\alpha \in \{1,\ldots,P\}$, denoted $N_{\alpha}$ increases as the network size increases, and tends to infinity in the limit considered, which we will be denoting $N\to\infty$. We define the population function that associates to a neuron $i$ the population $\alpha$ it belongs to: $p(i)=\alpha$.

Let us consider a single neuron $i$ belonging to population $\alpha$. The state of the neuron, $X^i$, has an intrinsic dynamics governed by a nonlinear drift function $f_{\alpha}:\R\times E \mapsto E$ (including the intrinsic dynamics and the deterministic inputs) and a diffusion matrix $g_{\alpha}:\R\times E \mapsto E^{m}$. Each individual process $(X^i_t)$ takes values in $E$, and when there is no interaction, is governed by the equation:
\[dX^{i}_t= f_{\alpha}(t,X^{i}_t)\, dt + g_{\alpha}(t,X^{i}_t)\,dW_t^i.\]
where $(W^{i}_t)_{i\in\N}$ are independent $m\times d$ Brownian motions\footnote{The degree of freedom given by the choice of $m$-dimensional Brownian motions allows to take into account different sources of independent sources of noise, typically in conductance-based neurons thermal noise affecting the voltage and channel noise affecting ionic concentrations.}

This equation corresponds to the intrinsic dynamics of each individual diffusion, i.e. the behavior of a single neuron when disconnected of the network. When included in the network, the neurons interact with all other neurons and this interaction is modeled through a continuous function $b_{\alpha\gamma}(x,y): E \times E \mapsto E$, multiplied by a coefficient, called the synaptic weight $w_{ij}$. These synaptic weights are themselves stochastic, due to the nature of the synaptic transmission. Incoming interactions are known to be bounded, the total incoming connectivity from a given population (say $\gamma$) to a single neuron from population $\alpha$ has a fixed averaged $J_{\alpha\gamma}$ and standard deviation $\sigma_{\alpha\gamma}$. We consider that the synaptic weights are stochastic, and are assumed to be equal, for a neuron $i$ in population $\alpha$ and a neuron $j$ in population $\gamma$, to
\[w_{ij}=w_{i\gamma}=\frac 1 {N_{\gamma}} J_{\alpha\gamma} + \frac{\sigma_{\alpha\gamma}}{N_{\gamma}} \;  \dot{B}^{i\gamma}\]
where $\dot{B}^{i\gamma}$ are independent $\delta$-dimensional ``white noise'' processes related to the independent Brownian motions $B^{i,\gamma}$ introduced\footnote{Note that this model has the disadvantage to possibly change the excitatory or inhibitory nature of the connection from neuron $i$ to neuron $j$ when the sign of $w_{ij}$ changes. Other more biologically relevant problems involve enforcing the synaptic weight not to change sign by modeling it as the solution of particular stochastic differential equation that does not change sign, such as the Cox-Ingersoll-Ross process, and by fit our framework by formally adding the $P$ synaptic weights of neuron $i$ in the state variable $X^i$ of neuron $i$. }.

The delay $\tau_{ij}$ occurring in the interaction between neuron $j$ and $i$, and related to the information transmission through the axons and the synapses, depend on the precise distance between neurons $i$ and $j$. We assume that for any fixed neurons $i\in\{1\cdots N\}$, the delays $(\tau_{ij})_{j\neq i}$ are pairwise independent (and independent of the Brownian motions involved in the network dynamics) and identically distributed in each population, with a law denoted $\eta_{i\,p(j)}$. This is a reasonable assumption from the modeling viewpoint. Indeed, neural populations can be considered to form certain clusters in the space in which the specific locations of neurons can be considered independent, and delays hence depend on the typical distance between the populations, their dispersion, and on the characteristic time constants of the synapse. From the referential of one given neuron $i$, the resulting delays are hence independent. From the global network viewpoint, these delays are of course correlated (distance between two points are symmetrical, topological relationship between neuronal locations induces relationship between distances, e.g. triangular inequality). Let us denote by $\tau$ the largest possible delay in the network. All delay distributions hence charge only the interval $[-\tau,0]$. We eventually denote by $\eta_{\alpha\gamma}$ the averaged delay between neurons of population $\alpha$ and neurons of population $\gamma$, were the average is taken over all possible values of $\eta_{i\gamma}$ for $i$ in population $\alpha$ (for instance, averaged on all possible locations for neurons of population $\alpha$).

The infinitesimal current the neuron $i$ receives from its neighbors is hence modeled as:
\begin{equation*}
	\sum_{\gamma=1}^P \sum_{p(j)=\gamma} \frac{J_{\alpha\gamma}}{N_{\gamma}} b_{\alpha\gamma}(X^{i,N}_t,X^{j,N}_{t-\tau_{ij}}) \, dt
	+\sum_{\gamma=1}^P \frac{\sigma_{\alpha\gamma}}{N_{\gamma}}\sum_{p(j)=\gamma} \beta_{\alpha\gamma}(X^{i,N}_t,X^{j,N}_{t-\tau_{ij}}) \, dB^{i\gamma}_t
\end{equation*}

For the sake of simplicity, we make an abuse of notations and replace the notation $J_{\alpha\gamma}b_{\alpha\gamma}(x,y)$ (resp. $\sigma_{\alpha\gamma}\beta_{\alpha\gamma}(x,y)$) by $b_{\alpha\gamma}(x,y)$ (resp. $\beta_{\alpha\gamma}(x,y)$). The resulting network equation reads:
\begin{multline}\label{eq:Network}
	d\,X^{i,N}_t=f_{\alpha}(t,X^{i,N}_t) \, dt + \sum_{\gamma=1}^P \sum_{j=1 , p(j)=\gamma}^{N} \frac{1}{N_{\gamma}} b_{\alpha\gamma}(X^{i,N}_t,X^{j,N}_{t-\tau_{ij}}) \,dt \\
	+ g_{\alpha}(t,X^{i,N}_t)\cdot dW^{i}_t
	+ \sum_{\gamma=1}^P \frac{1}{N_{\gamma}} \sum_{j=1 , p(j)=\gamma}^{N}  \beta_{\alpha\gamma}(X^{i,N}_t,X^{j,N}_{t-\tau_{ij}}) \cdot dB^{i\gamma}_t.
\end{multline}
These equations are stochastic differential equations on the infinite-dimensional space of functions $C([-\tau,0],E)$ (i.e. on the variable $\tilde{X}_t=(X_s,s\in[t-\tau,t])$, see e.g.~\cite{da-prato:92,mao:08}) of continuous functions of $[-\tau,0]$ with values in $E$. Let us denote by $\Ct=C([-\tau,0],E^P)$. The initial conditions are not specified at this point. We will sometimes make the assumption that the network has chaotic initial condition, i.e. independent identically distributed initial conditions. In details, for $(\zeta_0^{\alpha}(t), \alpha=1\cdots P) \in \Ct$ a stochastic process with independent components, a chaotic initial condition on the network consists in setting the initial condition of all neurons $i$ of population $\alpha$ to an independent copy of $\zeta^{\alpha}_0$.

For any $(\alpha,\gamma)\in\{1,\cdots,P\}^2$, we make the following technical assumptions on the parameters of the model:
\renewcommand{\theenumi}{(H\arabic{enumi})}
\begin{enumerate}
	\item\label{Assump:LocLipsch} $f_{\alpha}$ and $g_{\alpha}$ are uniformly $K$-Lipschitz-continuous with respect to their second variable
	\item\label{Assump:LocLipschb} $b_{\alpha\gamma}$ and $\beta_{\alpha\gamma}$ are $L$-Lipschitz-continuous, i.e. there exists a positive constant $L$ such that for all $(x,y)$ and $(x',y')$ in $E \times E$ we have:
\[\vert \Theta_{\alpha\gamma}(x,y)-\Theta_{\alpha\gamma}(x',y') \vert  \leq L \, (\vert x-x'\vert + \vert y-y'\vert ) \]
for $\Theta\in\{b,\,\beta\}$.

\item\label{Assump:bBound} There exists a $\tilde{K}>0$ such that:
\[\max(\vert b_{\alpha\gamma}(x,z)\vert^2,\vert \beta_{\alpha\gamma}(x,z)\vert^2) \leq \tilde{K}\]
\item\label{Assump:MonotoneGrowth} The drift and diffusion functions satisfy the monotone growth condition:
	\begin{equation*}
		x^T f_{\alpha}(t,x) + \frac 1 2 \vert g_{\alpha}(t,x) \vert^2  \leq K \; (1+\vert x \vert^2).
	\end{equation*}
\end{enumerate}
\begin{remark}
	In a previous version of this paper and in the published text, it was indicated that the drift and diffusions can be assumed locally Lipschitz continuous, to extend to the particular case of the Fitzhugh-Nagumo neuron model~\cite{fitzhugh:55}. All developments were nevertheless done under assumption (H1) above, since most classical neuron models satisfy that property. The extension to locally Lipschitz-continuous diffusions hinted in the previous version would actually require to develop finer techniques that those suggested, in the flavor of the methodology developed for FitzHugh-Nagumo neural networks in~\cite{quininao}, to ensure existence and uniqueness of solutions of the limit equation.
\end{remark}

In what follows, $\M^2(C([-\tau,0], E^N)$ denotes the space of square integrable stochastic processes on $[-\tau, 0]$ with values in $E^N$. Let us first state the following proposition ensuring well-posedness of the network system under the assumptions of the section:
\begin{proposition}\label{pro:ExistenceUniquenessNetwork}
	Let $X_0 \in \M^2(C([-\tau,0], E^N))$ an initial condition of the network system. Under the assumptions of the section, there exists a unique strong solution to the network equations \eqref{eq:Network}.
\end{proposition}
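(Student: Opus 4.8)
The plan is to solve \eqref{eq:Network} by the classical \emph{method of steps}, using that all delays are bounded by $\tau$ and that an initial segment is prescribed on $[-\tau,0]$. Since the delays $(\tau_{ij})$ are independent of all the Brownian motions, we may enlarge the filtration so that they are $\F_0$-measurable. On the first interval $[0,\tau]$ every delayed argument $t-\tau_{ij}$ lies in $[-\tau,0]$, hence $X^{j,N}_{t-\tau_{ij}}$ is entirely determined by the initial condition $X_0\in\M^2(C([-\tau,0],E^N))$; the map $(\omega,t)\mapsto X^{j,N}_{t-\tau_{ij}}$ is adapted and jointly measurable because $X_0$ has continuous paths and $\tau_{ij}\ge 0$. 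Thus \eqref{eq:Network} restricted to $[0,\tau]$ is a standard (non-delayed) Itô system on $E^N=\R^{dN}$, driven by the independent Brownian motions $(W^i,B^{i\gamma})$, with coefficients
\[
F^i(t,x)=f_{\alpha}(t,x^i)+\sum_{\gamma}\frac1{N_\gamma}\sum_{p(j)=\gamma}b_{\alpha\gamma}\bigl(x^i,X^{j,N}_{t-\tau_{ij}}\bigr),
\]
and an analogous diffusion coefficient assembled from $g_\alpha$ and the $\beta_{\alpha\gamma}$, the delayed terms playing the role of a bounded adapted forcing.

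\textbf{Verification of the hypotheses on $[0,\tau]$.} I would then check that these coefficients meet the assumptions of a standard strong existence–uniqueness theorem (e.g.\ \cite{mao:08}, Ch.~2). In the current variable $x$: $f_\alpha,g_\alpha$ are uniformly locally Lipschitz by~\ref{Assump:LocLipsch}, while each $x^i\mapsto b_{\alpha\gamma}(x^i,\cdot)$ and $x^i\mapsto\beta_{\alpha\gamma}(x^i,\cdot)$ is $L$-Lipschitz and bounded by $\sqrt{\tilde K}$ by~\ref{Assump:LocLipschb}--\ref{Assump:bBound}, so $F$ and the diffusion coefficient are locally Lipschitz in $x$ uniformly in $(\omega,t)$. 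Moreover~\ref{Assump:MonotoneGrowth} and the boundedness~\ref{Assump:bBound} combine into a global monotone growth estimate: bounding the interaction drift by $x^Tb\le\frac12(\vert x\vert^2+\vert b\vert^2)$ and the squared diffusion norm by a constant times $\vert g_\alpha\vert^2$ plus a constant, one gets $x^TF(t,x)+\frac12\vert(\text{diffusion})(t,x)\vert^2\le C\,(1+\vert x\vert^2)$ with $C$ deterministic. By the classical theorem — proved by localisation at $\theta_R=\inf\{t:\vert \mathbf X_t\vert\ge R\}$, solving the truncated globally-Lipschitz equation, and using the a priori $L^2$ bound from the monotone condition together with Gr\"onwall's lemma to show $\theta_R\uparrow\infty$ a.s. — there is a unique strong solution on $[0,\tau]$, and the monotone-growth estimate places it in $\M^2(C([0,\tau],E^N))$; the extra localisation needed for the merely locally Lipschitz $f_\alpha,g_\alpha$ is carried out in~\cite{touboulNeuralfields:11}.

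\textbf{Induction and conclusion.} Assuming a unique strong solution has been constructed on $[-\tau,k\tau]$ and lies in $\M^2$, on $[k\tau,(k+1)\tau]$ all delayed arguments $t-\tau_{ij}$ fall in $[(k-1)\tau,k\tau]$, so the restricted system is again a standard SDE with an adapted bounded delayed forcing and the same coefficient estimates; applying the previous step from the $\F_{k\tau}$-measurable segment $(X_s)_{s\in[(k-1)\tau,k\tau]}$ gives a unique strong solution on $[k\tau,(k+1)\tau]$ matching at $t=k\tau$ and staying in $\M^2$. Concatenating over $k\in\N$ and gluing the pathwise uniqueness from each step yields the unique global strong solution of \eqref{eq:Network}. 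I expect the only genuinely delicate point to be the non-explosion estimate: controlling the localising stopping times $\theta_R$ once the bounded interaction terms are superimposed on the merely locally Lipschitz / monotone-growth intrinsic dynamics. This is precisely the a priori moment bound furnished by~\ref{Assump:bBound} and~\ref{Assump:MonotoneGrowth} and follows the template of~\cite{touboulNeuralfields:11}; the random delays themselves introduce nothing beyond bookkeeping, since once they are frozen the problem reduces to the standard theory on each step.
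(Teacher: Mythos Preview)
There is a genuine gap in the very first step. You claim that on $[0,\tau]$ every delayed argument $t-\tau_{ij}$ lies in $[-\tau,0]$, so that the delayed terms are entirely determined by the initial segment. But in the paper $\tau$ is the \emph{largest} possible delay, not the smallest: the distributions $\eta_{\alpha\gamma}$ charge $[-\tau,0]$, so the $\tau_{ij}$ take values in $[0,\tau]$ and may be arbitrarily small. For $t\in[0,\tau]$ and $\tau_{ij}\in[0,\tau]$ one only has $t-\tau_{ij}\in[-\tau,\tau]$, and in particular $t-\tau_{ij}$ can land in $(0,\tau]$. Hence the delayed term $X^{j,N}_{t-\tau_{ij}}$ is \emph{not} determined by the initial condition alone, and on $[0,\tau]$ the system is still genuinely a delay SDE, not an ordinary It\^o SDE with adapted forcing. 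The same error propagates through your induction: on $[k\tau,(k+1)\tau]$ the delayed argument can reach all the way up to $(k+1)\tau$.

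The method of steps is salvageable only under an extra assumption the paper does not make, namely the existence of a uniform lower bound $\tau_{\min}>0$ on all delays; in that case the correct step size is $\tau_{\min}$, not $\tau$, and your argument goes through verbatim with that replacement. Without such a lower bound (and the delay measures here are allowed to put mass near $0$), one must argue differently. The paper's route is to invoke Mao's theory for stochastic functional differential equations directly and to mimic the Picard/contraction argument of Theorem~\ref{thm:ExistenceUniqueness}: one works with the whole segment process on $C([-\tau,0],E^N)$, checks that the combined drift and diffusion satisfy the local Lipschitz and monotone growth conditions in the functional sense (using \ref{Assump:LocLipsch}--\ref{Assump:MonotoneGrowth} exactly as you did), and closes via the a~priori $\mathbbm L^2$ bound and Gronwall. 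That approach is insensitive to whether the delay distribution accumulates at $0$, which is why the paper points to it rather than to the method of steps.
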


The proof of this proposition is a direct application of Mao's result~\cite{mao:02}, and essentially uses the same arguments as those of the proof theorem~\ref{thm:ExistenceUniqueness}. The interested reader is invited to follow the steps of this demonstration to prove proposition~\ref{pro:ExistenceUniquenessNetwork}.

We address the problem of the asymptotic behavior, as the number of neurons $N$ goes to infinity, of these network equations, in particular the question of the propagation of chaos property in this randomly delayed, randomly interacting system. Assuming translation invariance in the neural field, the delays distribution $\eta_{ij}$ is only function of the neural population $i$ belongs to. In that case, for any realization of the synaptic delays $\tau_{ij}$, the propagation of chaos property states that, provided that the initial condition on the network is chaotic (i.e. for all $i$ in population $\alpha$, the initial condition $X^i_t\vert_{[-\tau,0]}$ is an independent copy of a given process $\zeta^{\alpha} \in \Ct$) and when $N\to \infty$, the state of each neurons are independent for all times, and have a law given by a nonlinear McKean-Vlasov equation\footnote{McKean-Vlasov is the generic term in the mathematical literature on statistical physics for describing the type of equation we obtain. It is differs from the well-know to physicists Vlasov equation, which describes the evolution of the distribution function of plasma consisting of charged particles with long-range interactions, in particular in that our equation includes a diffusion term (Brownian motion) and is written as a stochastic equation, whereas usual the Vlasov equation is written as a nonlinear PDE closer from what we call the \emph{McKean-Vlasov-Fokker-Planck} equation.} that only depends on the neural population (say, $\alpha$), which is the unique solution of the McKean-Vlasov equations with distributed delays:
\begin{multline}\label{eq:MFE}
	d\bar{X}^{\alpha}_t= f_{\alpha}(t,\bar{X}^{\alpha}_t)\,dt + \sum_{\gamma=1}^P \int_{-\tau}^0  \Exp_{\bar{Y}}[b_{\alpha\beta}(\bar{X}^{\alpha}_t, \bar{Y}^{\beta}_{t+s})]d\eta_{\alpha\beta}(s) \,dt \\
	+ g_{\alpha}(t,\bar{X}^{\alpha}_t)\,dW_t^{\alpha}
	+ \sum_{\gamma=1}^P\int_{-\tau}^0 \Exp_{\bar{Y}}[\beta_{\alpha\beta}(\bar{X}^{\alpha}_t, \bar{Y}^{\beta}_{t+s})]d\eta_{\alpha\beta}(s) \,dB^{\alpha\gamma}_t
\end{multline}
where $\bar{Y}$ is a process independent of $\bar{X}$ that has the same law, $\Exp_{\bar{Y}}$ the expectation under the law of $\bar{Y}$, $W^{\alpha}_t$ and $B^{\alpha\gamma}_t$ are independent adapted standard Brownian motions of dimensions $m\times d$ and $\delta \times d$ respectively. In other words, the mean field equation can be written, denoting by $m_t^{\gamma}(dx)$ the law of $\bar{X}^{\gamma}_t$:
\begin{multline*}
	d\bar{X}^{\alpha}_t= f_{\alpha}(t,\bar{X}^{\alpha}_t)\,dt + \sum_{\gamma=1}^P J_{\alpha\gamma}\int_{-\tau}^0\int_E b_{\alpha\gamma}(\bar{X}^{\alpha}_t, y) m^{\gamma}_{t+s}(dy)\,d\eta_{\alpha\gamma}(s) dt \\
	+ g_{\alpha}(t,\bar{X}^{\alpha}_t)\,dW_t^{\alpha}
	+ \sum_{\gamma=1}^PJ_{\sigma\gamma}\int_{-\tau}^0\int_E \beta_{\alpha\gamma}(\bar{X}^{\alpha}_t, y) m^{\gamma}_{t+s}(dy)\,d\eta_{\alpha\gamma}(s) dB^{\alpha\gamma}_t
\end{multline*}
where $m^{\alpha}_t$ is the probability distribution of $\bar{X}^{\alpha}_t$. In these equations, $W_t^{\alpha}$ for $\alpha=1\cdots P$ are independent standard $m$-dimensional Brownian motions. This equation is also classically written as the McKean-Vlasov Fokker-Planck equation on the probability density, with a slightly abuse of notations denoted $m^{\gamma}_{t}(y)$, of the solution:
\begin{multline}\label{eq:McKeanVlasovFP}
	\partial_t m^{\alpha}_t(x)= -\nabla_x \Big(f_{\alpha}(t,x)\;m^{\alpha}_t(x)\Big) -\nabla_x \Big(\sum_{\gamma=1}^P \int_{-\tau}^0 \int_E b_{\alpha\gamma}(x, y) m^{\gamma}_{t+s}(y)\,dy\,d\eta_{\alpha\gamma}(s) m^{\alpha}_t(x)\Big) \\
	+ \frac 1 2 \Delta_x \Big(g_{\alpha}(t,x)^2 m^{\alpha}_t(x)\Big)
	+ \frac 1 2 \Delta_x  \Big\{\sum_{\gamma=1}^P\Big(\int_{-\tau}^0 \int_E \beta_{\alpha\gamma}(x, y) m^{\gamma}_{t+s}(y)\,dy\,d\eta_{\alpha\gamma}(s)\Big)^2 m^{\alpha}_t(x)\Big\}
\end{multline}
with initial condition $m^{\alpha}_t(x)\vert_{t\in [-\tau,0]}=p^{\alpha}(t,x)$ where $p^{\alpha}$ is the density of $\zeta^{\alpha}_0$, if it exists.

The same result holds when the neural field is not translation invariant (i.e. $\eta_{i\beta}$ varies depending on the choice of neuron $i$ in population $\alpha$), but in that case the result holds for the law of the limit distribution averaged on the law of the delays.

In that view, the equation \eqref{eq:MFE} is an implicit equation on the law of $\bar{X}_t$ (denoting when no confusion is possible the vector $(\bar{X}^{\alpha}_t,\alpha=1\cdots P)$). In order to prove the propagation of chaos property, we will use the coupling method (see e.g.~\cite{sznitman:89}). The proof is in two steps: (i) we prove that the equation \eqref{eq:MFE} has a unique solution, and (ii) that the law of $X^{i,N}_t$ converges towards the law of \eqref{eq:MFE} and more precisely that for any finite set of neurons $\{i_1,\cdots,i_k\}$ the law of $(X^{i_1,N}_t,\cdots,X^{i_k,N}_t, t\in[-\tau, T])$ converge almost surely towards a vector $(\bar{X}^{1}_t, \cdots, \bar{X}^{k}_t, t\in[-\tau,T])$ where the processes $\bar{X}^{l}$ are independent and have the law of $X^{p(i_l)}$ given by \eqref{eq:MFE}.

\section{Well-Posedness of the delayed McKean-Vlasov equations}

In this section, we show existence and uniqueness of solutions for the delayed McKean-Vlasov equation~\eqref{eq:MFE} that constitute the putative mean-field limit. Let us denote by $\mathcal{M}(\mathcal{C})$ the set of probability distributions on $\mathcal{C}$ the set continuous functions $[-\tau,T] \mapsto E^{P}$, and $\M^2(\C)$ the space of square-integrable processes.

Let $(W^{\alpha}\;;\; \alpha =1\cdots P)$ (respectively $(B^{\alpha\gamma}, \;;\; \alpha,\gamma =1\cdots P)$) be a family of $P$ (resp. $P^2$) independent $m\times d$ (resp. $\delta\times d$)-dimensional standard Brownian motion adapted on $(\Omega,\F,P)$. Let us also fixe $\zeta_0 \in \M^2(\Ct)$ a random variable, and $(\zeta^i_0)_{i\in \N}$ an iid sequence with the same law as $\zeta_0^{p(i)}$, that constitutes the chaotic initial condition of the network equations. We introduce the map $\Phi$ map acting on stochastic processes and defined by:
\[\Phi:
\begin{cases}
	\mathcal{M}(\mathcal{C}) &\mapsto \mathcal{M}(\mathcal{C})\\
	X &\mapsto (Y_t=\{Y^{\alpha}_t, \alpha=1\cdots P\})_t \text{ with } \\
	&Y_t^{\alpha}=\zeta_0^{\alpha}(0) + \int_0^t \Big(f_{\alpha}(s,X_s^{\alpha}) + \sum_{\gamma=1}^P \int_{-\tau}^0\Exp_{Z}[b_{\alpha\gamma}(X_s^{\alpha}, Z^{\gamma}_{s+u})] \Big)\,d\eta_{\alpha\gamma}(u) \,ds \\
	& + \int_0^t g_{\alpha}(s,X^{\alpha}_s)\,dW_s^{\alpha} + \sum_{\gamma=1}^P \int_0^t \int_{-\tau}^0\Exp_{Z}[\beta_{\alpha\gamma}(X_s^{\alpha}, Z^{\gamma}_{s+u})]\,d\eta_{\alpha\gamma}(u) \cdot dB_s^{\alpha\gamma} \;;\; t>0 \\
	& \\
	& Y_t=\zeta_0^{\alpha}(t) \qquad t\in [-\tau, 0]
\end{cases}
\]
where the process introduce a process $Z_t$ has the same law as $X$ and to be independent of $X$. There is a trivial identification between the solutions of the mean-field equation \eqref{eq:MFE} and the fixed points of the map $\Phi$: any fixed point of $\Phi$ provides a solution for equation \eqref{eq:MFE} and conversely any solution of equation \eqref{eq:MFE} constitute a fixed point of $\Phi$.

Let us start by proving that the possible solutions of the system have bounded second moment.

\begin{lemma}\label{lem:SoluL2}
	Let $\zeta_0 \in \M^2(C([-\tau,0],E^{P})$ a square-integrable process. Under the assumptions (H1)-(H4), there exists a constant $C(T)>0$ depending on the parameters of the system and on the horizon $T$, such that for any solution $X$ of the mean-field equation \eqref{eq:MFE} with initial condition $\zeta_0$:
	\[\sup_{[-\tau, T]}\Exp[ \vert X_t \vert^2]\leq C(T).\]
\end{lemma}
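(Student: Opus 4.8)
The plan is to obtain an a priori $L^2$ bound by a standard Gr\"onwall argument applied to the function $u(t):=\sup_{s\in[-\tau,t]}\Exp[\vert X_s\vert^2]$, using the fixed-point representation of any solution through the map $\Phi$ together with assumptions \ref{Assump:LocLipsch}--\ref{Assump:MonotoneGrowth}. First I would fix a solution $X$ of \eqref{eq:MFE} with initial condition $\zeta_0$ and, for each population $\alpha$, write $X_t^\alpha$ in integral form. Applying It\^o's formula to $\vert X_t^\alpha\vert^2$ gives, after taking expectations (the stochastic integrals being true martingales up to a localization/stopping-time argument, cf.\ the remark that one may first work with globally Lipschitz $f_\alpha,g_\alpha$),
\begin{multline*}
\Exp[\vert X^\alpha_t\vert^2] = \Exp[\vert \zeta_0^\alpha(0)\vert^2] + \int_0^t \Exp\Big[2 (X_s^\alpha)^T f_\alpha(s,X_s^\alpha) + \vert g_\alpha(s,X_s^\alpha)\vert^2\Big]\,ds \\
+ \int_0^t \Exp\Big[2 (X_s^\alpha)^T \sum_{\gamma=1}^P \int_{-\tau}^0 \Exp_Z[b_{\alpha\gamma}(X_s^\alpha,Z_{s+u}^\gamma)]\,d\eta_{\alpha\gamma}(u)\Big]\,ds \\
+ \int_0^t \Exp\Big[\Big\vert \sum_{\gamma=1}^P \int_{-\tau}^0 \Exp_Z[\beta_{\alpha\gamma}(X_s^\alpha,Z_{s+u}^\gamma)]\,d\eta_{\alpha\gamma}(u)\Big\vert^2\Big]\,ds.
\end{multline*}

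Next I would bound each term. The drift-plus-diffusion term is controlled by \ref{Assump:MonotoneGrowth}, giving a contribution $\leq 2K\int_0^t(1+\Exp[\vert X_s^\alpha\vert^2])\,ds$. For the interaction terms, rather than using the Lipschitz bound \ref{Assump:LocLipschb} I would use the uniform boundedness \ref{Assump:bBound}: since $\vert b_{\alpha\gamma}\vert^2\le\tilde K$ and $\eta_{\alpha\gamma}$ is a probability measure on $[-\tau,0]$, Jensen's inequality gives $\vert\int_{-\tau}^0\Exp_Z[b_{\alpha\gamma}(X_s^\alpha,Z_{s+u}^\gamma)]\,d\eta_{\alpha\gamma}(u)\vert\le \sqrt{\tilde K}$, so the drift interaction term is bounded using $2a^Tc\le \vert a\vert^2+\vert c\vert^2$ by $\int_0^t(\Exp[\vert X_s^\alpha\vert^2] + P^2\tilde K)\,ds$, and similarly the martingale-bracket term by $P^2\tilde K\, t$. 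Summing over $\alpha=1,\dots,P$ and writing $\phi(t):=\sum_\alpha\Exp[\vert X_t^\alpha\vert^2]=\Exp[\vert X_t\vert^2]$, one gets $\phi(t)\le \Exp[\vert\zeta_0\vert^2] + C_1 t + C_2\int_0^t\phi(s)\,ds$ for constants $C_1,C_2$ depending only on $K,\tilde K,P$. Since on $[-\tau,0]$ one has $\phi(t)=\Exp[\vert\zeta_0(t)\vert^2]\le\Exp[\Vert\zeta_0\Vert_\infty^2]<\infty$ because $\zeta_0\in\M^2(\Ct)$, setting $u(t)=\sup_{[-\tau,t]}\phi(s)$ yields $u(t)\le \Exp[\Vert\zeta_0\Vert_\infty^2] + C_1 T + C_2\int_0^t u(s)\,ds$, and Gr\"onwall's lemma gives $u(T)\le (\Exp[\Vert\zeta_0\Vert_\infty^2]+C_1 T)e^{C_2 T}=:C(T)$, which is the claim.

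The only genuine subtlety, and the step I expect to require the most care, is justifying that the stochastic integrals have zero expectation — a priori we only know $X$ is a solution, not yet that it is square-integrable, so the argument must be run with the stopping times $\tau_n=\inf\{t:\vert X_t\vert\ge n\}$, deriving the Gr\"onwall inequality for $\Exp[\vert X_{t\wedge\tau_n}\vert^2]$ with constants \emph{independent of $n$}, and then letting $n\to\infty$ by Fatou's lemma. This is exactly the standard localization scheme for SDEs under a monotone growth condition (as in Mao~\cite{mao:08,mao:02}); the boundedness assumption \ref{Assump:bBound} on the interaction kernels is what makes the delayed, McKean--Vlasov nature of the equation harmless here, since the nonlocal-in-time and nonlocal-in-law terms contribute only additive constants rather than terms involving $\sup_{[-\tau,s]}\Exp[\vert X\vert^2]$ that would otherwise need a more delicate treatment of the delay. (Had we instead only the Lipschitz bound \ref{Assump:LocLipschb}, the delayed term would produce $\int_0^t\sup_{[-\tau,s]}\phi$, which is still amenable to Gr\"onwall but requires the $\sup$-formulation from the outset.)
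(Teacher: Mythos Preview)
Your proposal is correct and follows essentially the same approach as the paper: apply It\^o's formula to $\vert X_t^\alpha\vert^2$ with a localizing stopping time $\tau_n$, control the intrinsic drift--diffusion pair via the monotone growth assumption \ref{Assump:MonotoneGrowth}, control the interaction terms via the uniform bound \ref{Assump:bBound}, sum over populations, apply Gr\"onwall, and let $n\to\infty$. The only cosmetic difference is that you bound the cross term $2(X_s^\alpha)^T\!\sum_\gamma\!\int\Exp_Z[b_{\alpha\gamma}]d\eta_{\alpha\gamma}$ via $2a^Tc\le\vert a\vert^2+\vert c\vert^2$ (yielding purely additive constants from the interactions), whereas the paper uses $\vert x^Tb\vert\le\sqrt{\tilde K}(1+\vert x\vert^2)$ (yielding an extra multiplicative contribution absorbed into the Gr\"onwall constant); your observation that \ref{Assump:bBound} renders the delayed McKean--Vlasov terms harmless as additive constants is in fact slightly cleaner, and your closing remark on how \ref{Assump:LocLipschb} alone would instead force the $\sup$-in-time formulation is accurate.
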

\begin{remark}
	The proof below corrects a misprint of the published version (useless stopping time $\tau_n$ appeared in the calculations). 
\end{remark}
\begin{proof}
	The proof is based on the application of It\^o's formula for the squared modulus of $X_t$, standard inequalities and Gronwall's lemma.

	Let $X$ be a solution of the mean-field equations. Due to the non-standard nature of the equation, let us underline the fact that It\^o's formula is valid, i.e. that $X_{t}$ is a semimartingale. By definition, it is clear that both $X_{t+s}$ and $Z_{t+s}$ are $\F_t$ measurable for all $s\in [-\tau,0]$, and hence necessarily $X_{t}$ is a semi-martingale, i.e. the sum of a continuous adapted process of finite variation:
	\[\int_0^t \Big(f_{\alpha}(s,X_s^{\alpha}) + \sum_{\gamma=1}^P \int_{-\tau}^0\Exp_{\bar{Z}}[b_{\alpha\gamma}(\bar{X}_s^{\alpha}, \bar{Z}^{\gamma}_{s+u})] \,d\eta_{\alpha\gamma}(u)\Big) \,ds\]
	and a continuous $(\F_t,\mathbbm{P})$-local martingale, defined as the sum of different stochastic integral of a progressively measurable processes with respect to Brownian motions:
	\[\int_0^t g_{\alpha}(s,\bar{X}^{\alpha}_s)\,dW_s^{\alpha} + \sum_{\gamma=1}^P \int_0^t \int_{-\tau}^0\Exp_{\bar{Z}}[\beta_{\alpha\gamma}(\bar{X}_s^{\alpha}, \bar{Z}^{\gamma}_{s+u})]\,d\eta_{\alpha\gamma}(u) \cdot dB_s^{\alpha\gamma}\]
	 We can hence apply It\^o formula to $\vert X_{t}\vert^2$, and because of the particular form of the squared norm, we can treat each component $\alpha\in\{1,\cdots,P\}$ separately. Since all the different Brownian motions involved are independent, we have for all $t>0$:
	\begin{multline*}
	\vert X_{t}^{\alpha}\vert^2=\vert \zeta_0^{\alpha}(0) \vert^2 + 2\int_0^{t} \Big\{ (X_s^{\alpha})^T f_{\alpha}(s,X_s^{\alpha}) + \frac 1 2 \vert g_{\alpha}(s,X_s^{\alpha})\vert^2  \\
	+ (X_s^{\alpha})^T \sum_{\gamma=1}^P \int_{-\tau}^0 \Exp_{Y}[b_{\alpha\gamma}(X_s^{\alpha},Y_{s+u}^{\gamma})] d\eta_{\alpha\gamma}(u)
	+
	\frac 1 2 \sum_{\gamma=1}^P  \vert \int_{-\tau}^0 \Exp_{Y}[\beta_{\alpha\gamma}(X_s^{\alpha},Y_{s+u}^{\gamma})] d\eta_{\alpha\gamma}(u) \vert^2 \Big\}\,ds \\
	+\int_0^{t}  (X_s^{\alpha})^T g_{\alpha}(s,X_s^{\alpha}) dW^{\alpha}_t +  \sum_{\gamma=1}^P \int_0^{t} (X_s^{\alpha})^T\int_{-\tau}^0 \Exp_{Y}[\beta_{\alpha\gamma}(X_s^{\alpha},Y_{s+u}^{\gamma})] d\eta_{\alpha\gamma}(u) dB^{\alpha\gamma}_s
	\end{multline*}

	The stochastic integral term has a null expectation, and the Stieljes integration term involves the term  $\vert x^T \,b_{\alpha\gamma}(x,z)\vert $, which is upperbounded because of assumption~\ref{Assump:bBound} by $\sqrt{\tilde{K}} (1+\vert x \vert^2)$, and the term $x^T \,f_{\alpha}(t,x)+\frac 1 2 \vert g_{\alpha}(t,x) \vert^2$ which is upperbounded, using assumption~\ref{Assump:MonotoneGrowth}, by $K(1+\vert x \vert^2)$. Finally, assumption~\ref{Assump:bBound} again allows us to upperbound the term $\frac 1 2 \Exp_{Y}[\vert\beta_{\alpha\gamma}(X_s^{\alpha},Y_s^{\gamma})\vert^2]$ by $\frac{\tilde{K}}{2} (1+\vert X_s^{\alpha} \vert^2)$.
	We hence have for all $t>0$:
	\begin{align*}
	\Exp[\vert X_{t}^{\alpha}\vert^2] &\leq \Exp[\vert \zeta_0^{\alpha}(0) \vert^2] + 2\int_0^{t} \Big\{ K(1+\Exp[\vert X_s^{\alpha}\vert^2])  + P \sqrt{\tilde{K}} (1+\Exp[\vert X_s^{\alpha}\vert^2]) \\
	&\qquad + \frac {\tilde{K}} 2  \sum_{\gamma=1}^P \int_{-\tau}^0  (1+\Exp[\vert X_s^{\alpha}\vert^2])  d\eta_{\alpha\gamma}(u)  \Big\}\,ds \\
	 &\leq \Exp[\vert \zeta_0^{\alpha}(0) \vert^2] + 2\int_0^{t} (K+  \sqrt{\tilde{K}} P + \frac {\tilde{K}} 2 P )(1+\Exp[\vert X_s^{\alpha}\vert^2])\,ds
	\end{align*}
	Summing over $\alpha$ and applying Gronwall's lemma directly yields:
	\[\sup_{t\in [0,{T}]} \Exp[\vert X_t\vert^2] \leq \Exp[\vert \zeta_0(0)\vert^2] +\Exp[1+\vert \zeta_0(0)\vert^2] (e^{K' T}-1)\]
	with $K'=2\,(K+\frac{\tilde{K}}{2} P + \sqrt{\tilde{K}} \, P)$, and hence conclude that:
	\[\sup_{t\in [-\tau,T]} \Exp[\vert X_t\vert^2] \leq \max(\sup_{[-\tau,0]} \Exp[\vert\zeta_0(s)\vert^2] \;\;,\;\; \Exp[\vert \zeta_0(0)\vert^2] + \Exp[1+\vert \zeta_0(0)\vert^2] (e^{K' T}-1))\]
\end{proof}

We now prove the existence and uniqueness theorem, using this boundedness property of the possible solutions.

\begin{theorem}\label{thm:ExistenceUniqueness}
	For any $\zeta_0 \in \M^2(C([-\tau,0],E^{P})$ a square-integrable process, the mean-field equation \eqref{eq:MFE} with initial condition $\zeta_0$ has a unique strong solution on $[-\tau,T]$ for any $T>0$.
\end{theorem}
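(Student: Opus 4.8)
The plan is to prove existence and uniqueness via the classical fixed-point argument for the map $\Phi$ introduced above, working on the space $\M^2(\C)$ of square-integrable processes on $[-\tau,T]$ equipped with a suitable metric. First I would observe that, thanks to Lemma~\ref{lem:SoluL2}, any candidate solution automatically has finite second moment on $[-\tau,T]$, so it suffices to exhibit a unique fixed point of $\Phi$ among square-integrable processes. I would endow $\M^2(\C)$ with the distance
\[
D_t(X,Y)^2 = \Exp\Big[\sup_{s\in[-\tau,t]} \vert X_s - Y_s\vert^2\Big],
\]
and show that $\Phi$ maps $\M^2(\C)$ into itself: this uses the boundedness assumption~\ref{Assump:bBound} on $b_{\alpha\gamma},\beta_{\alpha\gamma}$, the monotone growth condition~\ref{Assump:MonotoneGrowth} on $f_\alpha,g_\alpha$, the Burkholder–Davis–Gundy inequality for the stochastic integrals, and the fact that $\zeta_0\in\M^2(\Ct)$, exactly along the lines of the a priori estimate already carried out in Lemma~\ref{lem:SoluL2}.

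The core of the argument is a contraction estimate. Given two processes $X,Y$ and the associated auxiliary independent copies $Z,Z'$ (chosen with the same law and coupled appropriately), I would bound $\Exp[\sup_{s\le t}\vert \Phi(X)_s-\Phi(Y)_s\vert^2]$ term by term. The drift and diffusion contributions $f_\alpha,g_\alpha$ are handled by their (global) Lipschitz continuity from~\ref{Assump:LocLipsch}; for the interaction terms I would use the Lipschitz bound~\ref{Assump:LocLipschb} on $b_{\alpha\gamma}$ and $\beta_{\alpha\gamma}$, the Cauchy–Schwarz inequality to deal with the integration against the probability measures $d\eta_{\alpha\gamma}$, and BDG for the martingale parts. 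A point requiring care: the interaction depends on $Z_{s+u}$ for $u\in[-\tau,0]$, i.e. on the past of the law; since $\Exp_Z[\beta_{\alpha\gamma}(X_s^\alpha,Z_{s+u}^\gamma)]$ is being compared with $\Exp_{Z'}[\beta_{\alpha\gamma}(Y_s^\alpha,Z'_{s+u}^\gamma)]$, I would use that $Z,Z'$ can be realized on the same probability space so that $\Exp[\vert Z_{s+u}-Z'_{s+u}\vert^2]\le D_{s}(X,Y)^2$ for all $u\in[-\tau,0]$ (the shifted times $s+u$ never exceed $s$). This yields an estimate of the form
\[
D_t(\Phi(X),\Phi(Y))^2 \le C\int_0^t D_s(X,Y)^2\,ds
\]
for a constant $C=C(T,L,\tilde K,P)$, where the absence of a constant multiplying $D_t(X,Y)^2$ on the right — only the time-integrated quantity appears — is what makes the iteration work.

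From here the conclusion is standard: iterating the inequality gives $D_T(\Phi^n(X),\Phi^n(Y))^2 \le \frac{(CT)^n}{n!} D_T(X,Y)^2$, so some power $\Phi^n$ is a strict contraction on the complete metric space $(\M^2(\C),D_T)$; by the Banach fixed-point theorem $\Phi$ has a unique fixed point, which by the identification noted before the lemma is the unique strong solution of~\eqref{eq:MFE} on $[-\tau,T]$. Since $T>0$ is arbitrary this gives the result for all horizons. I expect the main obstacle to be purely bookkeeping rather than conceptual: correctly setting up the coupling of the auxiliary processes $Z,Z'$ so that the nonlinear (measure-dependent) interaction terms produce a genuine contraction, and carefully tracking that the delay shifts $t+s$ with $s\in[-\tau,0]$ only ever reach back into the already-controlled past, so that the one-sided Gronwall estimate closes; the local-versus-global Lipschitz issue for $f_\alpha,g_\alpha$ is dispatched by the localization remark already invoked in the text, via the stopping times $\tau_n$ as in Lemma~\ref{lem:SoluL2}.
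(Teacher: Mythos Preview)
Your proposal is correct and follows essentially the same route as the paper: a Picard/fixed-point argument on $\Phi$ based on the integral inequality $D_t(\Phi(X),\Phi(Y))^2 \le C\int_0^t D_s(X,Y)^2\,ds$, obtained by splitting the interaction terms, applying the Lipschitz bounds~\ref{Assump:LocLipsch}--\ref{Assump:LocLipschb}, Cauchy--Schwarz for the $d\eta_{\alpha\gamma}$-integrals, and BDG for the martingale parts, then iterating to get the $\frac{(CT)^n}{n!}$ decay. The only cosmetic difference is packaging: the paper writes out the Picard sequence $X^{k+1}=\Phi(X^k)$ and proves a.s.\ uniform convergence of the partial sums, whereas you invoke Banach's theorem for an iterate of $\Phi$; the underlying estimates and the handling of the delayed law-dependent terms (your coupling of $Z,Z'$ corresponds to the paper's telescoping $b(X^k,Z^k)-b(X^{k-1},Z^{k-1})$ into two Lipschitz pieces) are the same.
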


\begin{proof}
	We start by showing the existence of solutions, and then prove the uniqueness property. We recall that by application of lemma~\ref{lem:SoluL2}, any possible solution has a bounded second moment. The proof is based on a classical contraction argument on iterates of the map $\Phi$. The main difference with the usual setting is the fact that we are considering delayed equations, which leads us to deal with uniform in $[-\tau,T]$ convergence properties. The rest of the proof uses classical inequalities such as Cauchy-Schwarz (CS), Burkholder-Davis-Gundy (BDG) and H\"older inequalities, all these applied on a decomposition of the term of interest into elementary terms.\\
	\noindent {\it Existence:}\\
	Let $X^0\in \M^2(\C)$ such that $X^0\vert_{[-\tau, 0]} \eqlaw \zeta_0$ a given stochastic process.

	We introduce the sequence of probability distributions $(X^k)_{k \geq 0}$ on $\M(\C)$ defined by induction as $X^{k+1}=(\Phi(X^k))$.
	We denote by $(Z^k)$ a sequence of processes independent of the collection of processes $(X^k)$ and having the same law.
	We start by controlling one of the components of the sequence of processes $(X^k_t)$, and for compactness of notations denote ${}^{\alpha}X^{k}_t \in E$ the component $\alpha$ of $X^{k}_t$. We decompose into six elementary terms the difference:
	\begin{align*}
		{}^{\alpha}X^{k+1}_t-{}^{\alpha}X^k_t &=\int_0^t \Big ( f_{\alpha}(s,{}^{\alpha}X^{k}_s) - f_{\alpha}(s,{}^{\alpha}X^{k-1}_s)\Big)\,ds \\
		& \quad\quad + \int_0^t \sum_{\gamma=1}^P  \int_{-\tau}^0 \Exp_Z \Big[ b_{\alpha\gamma} ({}^{\alpha}X^{k}_s, {}^{\gamma}Z^{k}_{s+u}) - b_{\alpha\gamma}({}^{\alpha}X^{k-1}_s, {}^{\gamma}Z^{k}_{s+u}) \Big]d\eta_{\alpha\gamma}(u) \, ds \\
		& \quad\quad + \int_0^t \sum_{\gamma=1}^P  \int_{-\tau}^0 \Exp_Z\Big [ b_{\alpha\gamma}({}^{\alpha}X^{k-1}_s, {}^{\gamma}Z^{k}_{s+u}) - b_{\alpha\gamma}({}^{\alpha}X^{k-1}_s, {}^{\gamma}Z^{k-1}_{s+u}) \Big] d\eta_{\alpha\gamma}(u) \, ds \\
		& \quad\quad +\int_0^t \Big ( g_{\alpha}(s,{}^{\alpha}X^{k}_s)-g_{\alpha}(s,{}^{\alpha}X^{k-1}_s) \Big) \, dW_s^{\alpha}\\
		& \quad\quad + \sum_{\gamma=1}^P \int_0^t \int_{-\tau}^0 \Exp_Z \Big [ \beta_{\alpha\gamma}({}^{\alpha}X^{k}_s,   {}^{\gamma}Z^{k}_{s+u})   - \beta_{\alpha\gamma}({}^{\alpha}X^{k-1}_s, {}^{\gamma}Z^{k}_{s+u}) \Big]d\eta_{\alpha\gamma}(u) \cdot dB_s^{\alpha\gamma} \\
		& \quad\quad + \sum_{\gamma=1}^P\int_0^t \int_{-\tau}^0 \Exp_Z  \Big [ \beta_{\alpha\gamma}({}^{\alpha}X^{k-1}_s, {}^{\gamma}Z^{k}_{s+u}) - \beta_{\alpha\gamma}({}^{\alpha}X^{k-1}_s, {}^{\gamma}Z^{k-1}_{s+u}) \Big]d\eta_{\alpha\gamma}(u) \cdot dB_s^{\alpha\gamma} \\
		& \stackrel{\text{def}}{=} A_t^{\alpha} + B_t^{\alpha} + C_t^{\alpha} + D_t^{\alpha} + E_t^{\alpha} + F_t^{\alpha}
	\end{align*}
	where we simply identify each of the six terms $A_t=(A_t^{\alpha}, \alpha=1,\cdots,P)$, $B_t$, $C_t$, $D_t$, $E_t$ and $F_t$ with the corresponding expression in the previous formulation. By a simple convexity inequality (H\"older) we have:
	\[\vert X^{k+1}_t-X^k_t \vert^2 \leq 6 \Big( \vert A_t\vert^2 + \vert B_t\vert^2 + \vert C_t\vert^2+ \vert D_t\vert^2+\vert E_t\vert^2+\vert F_t\vert^2\Big)\]
	and treat each term separately.

	The term $A_t$ is easily controlled using Cauchy-Schwarz inequality, Fubini identity and standard inequalities and we obtain:
	\begin{align*}
		\Exp \Big[\sup_{\sup_{s\in[0,t]}} \vert A_s \vert^2\Big]
		&=	\Exp \Big[\sup_{s\leq t} \sum_{\alpha=1}^P \vert \int_0^s f_{\alpha}(u,{}^{\alpha}X^{k}_u) - f_{\alpha}(u,{}^{\alpha}X^{k-1}_u) \, du \vert^2 \Big ]\\
						&{\leq \Exp \Big[\sup_{s\leq t}  s\, \sum_{\alpha=1}^P \int_0^s \vert f_{\alpha}(s,{}^{\alpha}X^{k}_u) - f_{\alpha}(s,{}^{\alpha}X^{k-1}_u)\vert^2 \, ds  \Big ]}\\
						& {\leq K^2 \,t \, \Exp \Big[\sup_{s\leq t} \sum_{\alpha=1}^P\int_0^s \vert {}^{\alpha}X^{k}_u - {}^{\alpha}X^{k-1}_u \vert^2 \, ds \Big ]}\\
						& {\leq K^2 \,t \, \Exp \Big[ \int_0^t \vert X^{k}_s - X^{k-1}_s \vert^2 \, ds \Big ]}\\
		& \leq K^2 \,t \, \int_0^t \Exp \Big[ \sup_{-\tau\leq u\leq s} \vert X^{k}_u - X^{k-1}_u \vert^2\Big ] \, ds
	\end{align*}
	Similarly, the martingale term $D_t$ is bounded using the Burkholder-Davis-Gundy theorem to the $Pd$-dimensional martingale $(\int_0^t ( g_{\alpha}(s,{}^{\alpha}X^{k}_s)-g_{\alpha}(s,{}^{\alpha}X^{k-1}_s)) \, dW_s^{\alpha} , \alpha=1\ldots P)$ and we obtain:
	\begin{align*}
		\Exp \Big[\sup_{0\leq s\leq t} \vert D_s \vert^2\Big] &= {\Exp \Big[\sup_{s\leq t} \sum_{\alpha=1}^P\vert \int_0^s g_{\alpha}(u,{}^{\alpha}X^{k}_u) - g_{\alpha}(u,{}^{\alpha}X^{k-1}_u) \, dW^{\alpha}_u \vert^2 \Big]}\\
		 &{\leq 4 \int_0^t \Exp \Big[  \sum_{\alpha=1}^P \vert g_{\alpha}(s,{}^{\alpha}X^{k}_s) -g_{\alpha}(s,{}^{\alpha}X^{k-1}_s)\vert^2 \Big]  \, ds  }\\
		 &{\leq 4\,K^2 \, \int_0^t \Exp \Big[ \vert X^{k}_s - X^{k-1}_s \vert^2 \Big]  \, ds } \\
		& {\leq 4 K^2 \, \int_0^t \Exp \Big[ \sup_{-\tau \leq u\leq s} \vert X^{k}_u - X^{k-1}_u \vert^2 \Big]  \, ds }
	\end{align*}

	Let us now deal with the deterministic interaction terms $B_t$ and $C_t$. We have:
	\begin{align*}
		\vert B_t \vert^2 &=\sum_{\alpha=1}^P \left \vert \int_0^t \sum_{\gamma=1}^P \int_{-\tau}^0 (\Exp_Z[b_{\alpha\gamma}({}^{\alpha}X^{k}_s,{}^{\gamma}Z^{k}_{s+u}) - b_{\alpha\gamma}({}^{\alpha}X^{k-1}_s,{}^{\gamma}Z^{k}_{s+u})])\,d\eta_{\alpha\gamma}(u)\,ds \right\vert^2\\
		& \leq \sum_{\alpha=1}^P t\,P\,  \int_0^t \sum_{\gamma=1}^P \int_{-\tau}^0 \Exp_Z\left[ \vert b_{\alpha\gamma}({}^{\alpha}X^{k}_s,{}^{\gamma}Z^{k}_{s+u}) - b_{\alpha\gamma}({}^{\alpha}X^{k-1}_s,{}^{\gamma}Z^{k}_{s+u})\vert^2\right])
		\,d\eta_{\alpha\gamma}(u)\,ds\\
		&\leq t P^2  L^2 \int_0^t \vert X^k_s - X^{k-1}_s\vert^2\,ds \leq t P^2  L^2 \int_0^t \sup_{-\tau \leq u\leq s}\vert X^k_u - X^{k-1}_u\vert^2\,ds
	\end{align*}
	hence easily conclude that
	\[\Exp[\sup_{s\in[0,t]}\vert B_s \vert^2] \leq t P^2  L^2 \int_0^t \Exp[\sup_{-\tau \leq u\leq s}\vert X^k_u - X^{k-1}_u\vert^2]\,ds \]

	and similarly, since $Z^k$ and $Z^{k-1}$ have the same law as $X^{k}$ and $X^{k-1}$,
	\[\Exp[\sup_{s\in[0,t]}\vert C_s \vert^2] \leq t P^2  L^2 \int_0^t \Exp[\sup_{-\tau \leq u\leq s}\vert X^k_u - X^{k-1}_u\vert^2]\,ds \]

	Eventually, the terms $E_t$ and $F_t$ use Burkholder-David-Gundy (BDG) inequality in place of Cauchy-Schwarz' together with similar arguments as used for $B_t$ and $C_t$. Using the independence of the Brownian motions $(B^{\alpha\gamma}_t)$, BDG inequality yields, for the term $E_t$ (and similarly for the term $F_t$):
	\[\Exp[\sup_{s\in[0,t]}\vert \Theta_s \vert^2] \leq 4 P^2  L^2 \int_0^t \Exp[\sup_{-\tau \leq u\leq s}\vert X^k_u - X^{k-1}_u\vert^2]\,ds \]
	for $\Theta_t$ equal to $E_t$ or $F_t$.
	Putting all these evaluations together, we get:
	\begin{equation}\label{eq:Bounds}
		\Exp\Big[\sup_{s\in[0,t]} \vert X^{k+1}_s-X^k_s \vert^2 \Big] \leq 6(T+4)(K^2 + 2P^2\,L^2\, ) \int_0^t \Exp[ \sup_{-\tau \leq u\leq s} \vert X^{k}_u - X^{k-1}_u \vert^2] ds
	\end{equation}
Moreover, since $ X^{k+1}_t \equiv X^{k}_t$ for $t\in [-\tau, 0]$ by definition, we have, noting
\[M^k_t =\Exp\Big[\sup_{-\tau \leq s\leq t} \vert X^{k+1}_s-X^k_s \vert^2 \Big],\]
the recursive inequality $M^k_t \leq K'' \int_0^t M^{k-1}_s\,ds$ with $K''=6(T+4)(K^2 + 2P^2 L^2 )$. We thus get by an immediate recursion:
\begin{align}
	\nonumber M^k_t &\leq (K'')^k \int_0^t\int_0^{s_1}\cdots \int_0^{s_{k-1}}M^0_{s_k}\;ds_1\ldots ds_k\\
	\label{eq:CauchySeq} & \leq \frac{(K'')^k\, t^k}{k!} M^0_T
\end{align}
and $M^0_t$ is finite because of lemma~\ref{lem:SoluL2}. Routine methods starting from inequality~\eqref{eq:CauchySeq} and allows proving existence and uniqueness of fixed point for $\Phi$ (see~\cite[pp. 376--377]{revuz-yor:99}).
Indeed, the fact that
\[\sum_{n=1}^{\infty} \Ex{\sup_{s\in[-\tau,T]}\vert X^{k+1}_s-X^k_s \vert^2 }^{1/2} <\infty,\]
implies the almost sure convergence of the series $\sum_{n=0}^{\infty} \sup_{s\in[-\tau,T]}\vert X^{k+1}_s-X^k_s \vert$ and hence the a.s. uniform convergence of the partial sums:
\[X^0_t+\sum_{k=0}^n (X^{k+1}_t-X^k_t) = X^n_t\]
on $[-\tau,T]$. Denote by $\bar{X}_t$ the thus defined limit. It is clearly almost surely continuous and $\F_t$ adapted. From lemma \ref{lem:SoluL2}, we know that $\bar{X} \in \M^2(\C)$. It remains to show that $\bar{X}$ indeed satisfies the equation \eqref{eq:MFE}. This is easily done using the estimates we derived. Indeed, it is easy to show at this point that
\[\Exp[\sup_{-\tau \leq t\leq T}\vert \Phi(X^n)_t - \Phi(\bar{X})_t\vert^2] \leq K''\int_0^T \Exp[\sup_{-\tau \leq t\leq T}\vert X^n_s-\bar{X}_s\vert^2]\,ds \rightarrow 0\]
as $n \to \infty$. Hence $\Phi(\bar{X})$ is equal to the limit of the sequence $\Phi(X^n)$, which is by definition, equal to $X^{n+1}$ whose limit is $\bar{X}$. This concludes the fact that $\bar{X}$ is a fixed point of $\Phi$ which completes the proof of the existence of solutions of the mean-field equation \eqref{eq:MFE}.

\noindent{\it Uniqueness:}\\
Assume that $X$ and $Y$ are two solutions of the mean-field equations \eqref{eq:MFE}. From lemma \ref{lem:SoluL2}, we know that both solutions are in $\M^2(\C)$.
Using the bound \eqref{eq:Bounds}, we directly obtain the inequality:
\[\Exp\Big[\sup_{-\tau\leq s\leq T} \vert X_s-Y_s \vert^2 \Big] \leq K'' \int_0^{T} \Exp \Big[ \sup_{-\tau\leq u\leq s} \vert X_u - Y_u \vert^2 \Big]  \, ds\]
which by Gronwall's theorem directly implies $\Exp\Big[\sup_{-\tau\leq s\leq T} \vert X_s-Y_s \vert^2 \Big]=0$ whence $X=Y$ on $[-\tau,T]$ a.s. follows.
\end{proof}

\section{Convergence results}\label{sec:limitlaw}
We now turn to the main result, namely the convergence in law of the solutions of the network equations towards the equations~\eqref{eq:MFE}. More precisely, we are interested in showing that the propagation of chaos applies in this case. The propagation of chaos property states that provided that all neurons have a chaotic initial condition (i.e., i.i.d. in $\M^2(\Ct)$), then in the limit where the number of neurons is infinite, all neurons behave independently and the limit process satisfies the equation given by \eqref{eq:MFE}. The proof follows standard proofs in the domain as generally done in particular by Tanaka or Sznitman \cite{tanaka:78,sznitman:84a}. It is based on the very powerful coupling argument, that identifies the almost sure limit of the network solutions $X^{i,N}$ (the exponent $N$ denotes the dependence of the law of neuron $i$ upon the network size $N$) as the number of neuron tends to infinity, a method popularized by Sznitman in his extensive works (see e.g.~\cite{sznitman:89}), the ideas of which traces back to the 70's (for instance, Dobrushin uses it in~\cite{dobrushin:70}). We start by considering the translation invariant case where the delay measures $\eta_{i\beta}$ only depend on the neural population of neuron $i$, prior to dealing with the non translation invariant case.

\subsection{Quenched propagation of chaos for translation invariant neural fields}
We consider here $\eta_{i\gamma}=\eta_{\alpha\gamma}$ for all $i$ in population $\alpha$. We intend to prove the propagation of chaos property and the convergence towards the mean field equations for almost all realization of the the delays $\tau_{ij}$ under the assumption that for any fixed $i$, the random variables $(\tau_{ij})_{j\in\{1\cdots N\}}$ are independent. This is termed the quenched propagation of chaos property. In order to demonstrate this property, we define a coupling between the solutions of the network equations~\eqref{eq:Network} and the mean-field equations~\eqref{eq:MFE}. Let $i\in\N$ such that $p(i)=\alpha$. We define the process $\bar{X}^i$ solution of the mean-field equation~\eqref{eq:MFE}, driven by the Brownian motions $(W^i_t)$ and $(B^{i\gamma}_t)$ that govern the network process $X^i$, and having the same initial condition as neuron $i$ in the network, $\zeta^i_0 \in \M^2(\Ct)$. In details, $\bar{X}^i_t$ is the unique solution of the equation, for $\alpha=p(i)$:
\begin{equation}\label{eq:Coupling}
	\left \{
	\begin{array}{lll}
			d\bar{X}^i_t &= f_{\alpha}(t,\bar{X}^i_t)\,dt + \sum_{\gamma=1}^P\int_{-\tau}^0\Exp_Z[b_{\alpha\gamma}(\bar{X}^i_t,Z_{t+s}^{\gamma})]d\eta_{\alpha\gamma}(s)\,dt \\
			& \quad + g_{\alpha}(t,\bar{X}^{i}_t)\, dW^i_t
			+ \sum_{\gamma=1}^P\int_{-\tau}^0\Exp_Z[\beta_{\alpha\gamma}(\bar{X}^i_t,Z_{t+s}^{\gamma})]\,d\eta_{\alpha\gamma}(s)\,dB^{i\gamma}_t  & t\geq 0\\
		\bar{X}^i_t & = \zeta^i_0 (t) & t\in [-\tau, 0]
	\end{array}
	\right .
	,
\end{equation}
which constitutes a collection of independent stochastic processes $(\bar{X}^i_t)_{i=1\ldots N}$ that are identically distributed with law $\bar{X}^{p(i)}$.
Let us denote by $m_t^{\alpha}$ the probability distribution of $\bar{X}_t^{\alpha}$ solution of the mean-field equation \eqref{eq:MFE}. As previously, the processes $(Z^1_t,\cdots,Z^P_t)$ constitute a collection of processes independent of $(\bar{X}^i_t)_{i=1\ldots N}$ and having the distribution $m^1\otimes \cdots \otimes m^P$. We aim at proving the almost sure convergence of a collection of processes  $(\mathbbm{X}^N_t):=({X}^{i_k,N}_t, k=1\ldots l)$ for some $l\in \N^*$
and $(i_1,\cdots,i_l)\in\N^l$ towards $(\bar{X}^{i_k}_t)$, implying the convergence of the law of $(\mathbbm{X}^N_t)$ towards $m^{p(i_1)} \otimes\ldots\otimes m^{p(i_k)}$ as $N$ goes to infinity. We start by proving this result for $l=1$ which will readily imply the case $l\geq 1$. In what follows, we shall denote by $\E$ the expectation over the delays $\tau_{ij}$.

\begin{theorem}[Convergence in law]\label{thm:PropagationChaos}
	Under the assumptions (H1)-(H4) and the chaotic initial condition assumption, the process $(X^{i,N}_t, -\tau \leq t\leq T)$, solution of the network equations~\eqref{eq:Network}, converge almost surely towards the process $(\bar{X}^i_t, -\tau \leq t\leq T)$ solution of the mean-field equations~\eqref{eq:Coupling}. This implies in particular convergence in law of the process $(X^{i,N}_t, -\tau \leq t\leq T)$ towards $(\bar{X}^{\alpha}_t, -\tau \leq t \leq T) $
	solution of the mean-field equations~\eqref{eq:MFE}. Moreover, since $f$ and $g$ are globally Lipschitz-continuous, we have for any $i\in \N$ and any $T>0$:
\begin{equation}\label{eq:Propchaos}
	\max_{i=1\cdots N} \E\left\{\Exp\Big [\sup_{-\tau \leq s\leq T} \vert X^{i,N}_s - \bar{X}^i_s\vert^2 \Big]\right\}< \frac{C(T)}{\min_{\gamma}(N_{\gamma})}
\end{equation}
where $C(T)$ is a constant only depending on the parameters of the system and the time horizon $T$.
\end{theorem}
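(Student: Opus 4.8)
The plan is to exploit the coupling already in place: $X^{i,N}$ solves the network system~\eqref{eq:Network} while $\bar{X}^i$ solves the coupled mean-field equation~\eqref{eq:Coupling}, both driven by the same Brownian motions $(W^i,B^{i\gamma})$ and started from the same datum $\zeta^i_0$; since they agree on $[-\tau,0]$, their difference is the sum of four integral terms from $0$ to $t$: a drift difference $f_\alpha(s,X^{i,N}_s)-f_\alpha(s,\bar{X}^i_s)$, a deterministic interaction difference, a genuine diffusion difference $g_\alpha(s,X^{i,N}_s)-g_\alpha(s,\bar{X}^i_s)$ against $dW^i_s$, and a stochastic interaction difference against the $dB^{i\gamma}_s$. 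I would set $u_N(t):=\max_{i=1\cdots N}\E\{\Exp[\sup_{-\tau\le s\le t}|X^{i,N}_s-\bar{X}^i_s|^2]\}$, split $|X^{i,N}_t-\bar{X}^i_t|^2$ into these pieces via a convexity (H\"older) inequality, and aim for a single Gronwall inequality $u_N(t)\le \kappa_1(T)\int_0^t u_N(s)\,ds+\kappa_2(T)/\min_\gamma N_\gamma$ on $[0,T]$, exactly in the spirit of Lemma~\ref{lem:SoluL2} and Theorem~\ref{thm:ExistenceUniqueness}.

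For both interaction differences, the essential move is to split, for each $\gamma$,
\begin{multline*}
\frac1{N_\gamma}\sum_{p(j)=\gamma}b_{\alpha\gamma}(X^{i,N}_s,X^{j,N}_{s-\tau_{ij}})-\int_{-\tau}^0\Exp_Z[b_{\alpha\gamma}(\bar{X}^i_s,Z^\gamma_{s+u})]\,d\eta_{\alpha\gamma}(u)\\
=\frac1{N_\gamma}\sum_{p(j)=\gamma}\big(b_{\alpha\gamma}(X^{i,N}_s,X^{j,N}_{s-\tau_{ij}})-b_{\alpha\gamma}(\bar{X}^i_s,X^{j,N}_{s-\tau_{ij}})\big)\\
+\frac1{N_\gamma}\sum_{p(j)=\gamma}\big(b_{\alpha\gamma}(\bar{X}^i_s,X^{j,N}_{s-\tau_{ij}})-b_{\alpha\gamma}(\bar{X}^i_s,\bar{X}^j_{s-\tau_{ij}})\big)+R^{i\gamma}_s,
\end{multline*}
with $R^{i\gamma}_s:=\frac1{N_\gamma}\sum_{p(j)=\gamma}b_{\alpha\gamma}(\bar{X}^i_s,\bar{X}^j_{s-\tau_{ij}})-\int_{-\tau}^0\Exp_Z[b_{\alpha\gamma}(\bar{X}^i_s,Z^\gamma_{s+u})]\,d\eta_{\alpha\gamma}(u)$, and analogously with $\beta_{\alpha\gamma}$. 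The first bracket is $\le L\,|X^{i,N}_s-\bar{X}^i_s|$ by~\ref{Assump:LocLipschb}; after Cauchy--Schwarz (resp.\ Burkholder--Davis--Gundy for the stochastic term) and Fubini, its contribution to $\E\Exp[\sup_{[0,t]}|\cdot|^2]$ is $\le C\int_0^t u_N(s)\,ds$. The second bracket is $\le \frac L{N_\gamma}\sum_{p(j)=\gamma}\sup_{-\tau\le v\le s}|X^{j,N}_v-\bar{X}^j_v|$; a convexity inequality in the $\gamma$- and $j$-sums, then taking $\E\Exp$ and bounding each neuron's error by the running maximum $u_N(s)$, again gives $\le C\int_0^t u_N(s)\,ds$. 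This is the step that forces the uniform-in-$[-\tau,T]$ bookkeeping, because of the random time shifts $\tau_{ij}$. The drift and diffusion differences are controlled exactly as in Theorem~\ref{thm:ExistenceUniqueness} (Lipschitz continuity and BDG), producing the same type of bound.

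The hard part will be the residual $R^{i\gamma}_s$, which is the law-of-large-numbers term and the only one carrying the $1/N$ rate. The key point is that, conditionally on the whole path $\bar{X}^i$, the pairs $(\tau_{ij},\bar{X}^j)$ for $p(j)=\gamma$, $j\ne i$, are i.i.d.\ and independent of $\bar{X}^i$, with $\tau_{ij}\sim\eta_{\alpha\gamma}$ and $\bar{X}^j\eqlaw Z^\gamma$, so that the conditional expectation of $b_{\alpha\gamma}(\bar{X}^i_s,\bar{X}^j_{s-\tau_{ij}})$ given $\bar{X}^i$ equals $\int_{-\tau}^0\Exp_Z[b_{\alpha\gamma}(\bar{X}^i_s,Z^\gamma_{s+u})]\,d\eta_{\alpha\gamma}(u)$. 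Hence, up to the single self-interaction summand $j=i$ (which is $O(1/N_\gamma)$ deterministically by the boundedness assumption~\ref{Assump:bBound}), $R^{i\gamma}_s$ is an average of $N_\gamma$ (or $N_\gamma-1$) centered, conditionally independent terms bounded by $\sqrt{\tilde{K}}$ in modulus; expanding the square, the cross terms vanish and $\E\Exp[|R^{i\gamma}_s|^2]\le C\tilde{K}/N_\gamma$ for every $s$, uniformly in $i$. Summing over $\gamma$, integrating in $s$ --- via $|\int_0^t R^{i\gamma}_u\,du|^2\le t\int_0^t|R^{i\gamma}_u|^2\,du$ for the deterministic part, and BDG together with the independence of the $B^{i\gamma}$ across $\gamma$ for the stochastic part --- the combined contribution of the residual terms is $\le\kappa_2(T)/\min_\gamma N_\gamma$ with $\kappa_2(T)$ polynomial in $T$.

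Assembling all the estimates and taking the maximum over $i$ gives $u_N(t)\le\kappa_1(T)\int_0^t u_N(s)\,ds+\kappa_2(T)/\min_\gamma N_\gamma$, and Gronwall's lemma yields~\eqref{eq:Propchaos} with $C(T)=\kappa_2(T)e^{\kappa_1(T)T}$. The almost sure convergence --- hence the convergence in law --- follows from~\eqref{eq:Propchaos} by a standard Borel--Cantelli argument; the case of a finite collection $(i_1,\dots,i_l)$ reduces to $l=1$ because~\eqref{eq:Propchaos} is uniform in $i$ and the limit processes $\bar{X}^{i_k}$ are independent by construction~\eqref{eq:Coupling}. Finally, when $f_\alpha$ and $g_\alpha$ are only locally Lipschitz, the whole argument is run for the truncated coefficients and the truncation removed using the a priori moment bound of Lemma~\ref{lem:SoluL2}, as in~\cite{touboulNeuralfields:11}.
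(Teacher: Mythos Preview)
Your proposal is essentially the same as the paper's proof: the decomposition into (i) the $f$-difference, (ii) the $g$-difference, (iii)--(iv) the three-way split of each interaction term into a first-argument Lipschitz piece, a second-argument Lipschitz piece, and a centered law-of-large-numbers residual is exactly the paper's eight-term decomposition $A_t(N),\dots,H_t(N)$; the control of the residual by expanding the square, killing the off-diagonal terms via the independence of $(\bar{X}^j,\tau_{ij})_{j\ne i}$ conditionally on $\bar{X}^i$, and bounding the $O(N_\gamma)$ surviving terms by~\ref{Assump:bBound} is the same as the paper's treatment of $E_t(N)$ and $H_t(N)$; and the closure by Gronwall on $u_N(t)$ is identical.

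One small overreach: the bound~\eqref{eq:Propchaos} is of order $1/\min_\gamma N_\gamma$, which is not summable, so a direct Borel--Cantelli does not yield almost sure convergence along the full sequence. The paper itself only draws convergence in law from the $L^2$ bound at this point, so you should either do the same or argue along a subsequence; this does not affect the core of your argument.
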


\begin{proof}
	The proof differs from more classical proofs in that we consider that the network is composed of several distinct populations and includes random delays. The principle of the proof consists in thoroughly analyzing the difference between the two processes as $N$ tends to infinity. This difference is written as the sum of eight terms denoted $A_t(N)$ through $H_t(N)$:
	\begin{align}
		\nonumber X^{i,N}_t-\bar{X}^i_t&=\int_0^t (f_{\alpha}(s,X^{i,N}_s)-f_{\alpha}(s,\bar{X}^i_s)) \, ds + \int_0^t (g_{\alpha}(s,X^{i,N}_s)-g_{\alpha}(s,\bar{X}^i_s)) \, dW^i_s\\
		\nonumber &\qquad + \sum_{\gamma=1}^P \int_0^t \Big(\frac 1 {N_{\gamma}} \sum_{p(j)=\gamma} b_{\alpha\gamma}(X^{i,N}_s,X^{j,N}_{s-\tau_{ij}})-b_{\alpha\gamma}(\bar{X}^i_s,X^{j,N}_{s-\tau_{ij}})\Big) \,ds\\
		\nonumber &\qquad + \sum_{\gamma=1}^P \int_0^t \Big( \frac 1 {N_{\gamma}} \sum_{p(j)=\gamma} b_{\alpha\gamma}(\bar{X}^i_s,X^{j,N}_{s-\tau_{ij}})-b_{\alpha\gamma}(\bar{X}^i_s,\bar{X}^j_{s-\tau_{ij}})\Big) \, ds\\
		\nonumber &\qquad + \sum_{\gamma=1}^P\int_0^t \Big(\frac 1 {N_{\gamma}} \sum_{p(j)=\gamma}  b_{\alpha\gamma}(\bar{X}^i_s,\bar{X}^j_{s-\tau_{ij}})-\int_{-\tau}^0\Exp_Z[b_{\alpha\gamma}(\bar{X}^i_s,Z^{\gamma}_{s+u})]\,d\eta_{\alpha\gamma}(u)\Big)\, ds\\
		\nonumber &\qquad + \sum_{\gamma=1}^P\int_0^t \Big(\frac 1 {N_{\gamma}} \sum_{p(j)=\gamma} (\beta_{\alpha\gamma}(X^{i,N}_s,X^{j,N}_{s-\tau_{ij}})-\beta_{\alpha\gamma}(\bar{X}^i_s,X^{j,N}_{s-\tau_{ij}}))\, \Big)dB^{i\gamma}_s\\
		\nonumber &\qquad + \sum_{\gamma=1}^P\int_0^t  \Big( \frac 1 {N_{\gamma}} \sum_{p(j)=\gamma} \beta_{\alpha\gamma}(\bar{X}^i_s,X^{j,N}_{s-\tau_{ij}})-\beta_{\alpha\gamma}(\bar{X}^i_s,\bar{X}^j_{s-\tau_{ij}})\Big)\, dB^{i\gamma}_s
		\end{align}
	\begin{align}
		\nonumber &\qquad + \sum_{\gamma=1}^P\int_0^t  \Big(\frac 1 {N_{\gamma}} \sum_{p(j)=\gamma}\beta_{\alpha\gamma}(\bar{X}^i_s,\bar{X}^j_{s-\tau_{ij}})-\int_{-\tau}^0 \Exp_Z[\beta_{\alpha\gamma}(\bar{X}^i_s,Z^{\gamma}_{s+u})]\,d\eta_{\alpha\gamma}(u)\Big)\, dB^{i\gamma}_s\\
		\label{eq:EightTerms}&= A_t(N)+B_t(N)+C_t(N)+D_t(N)+E_t(N)+F_t(N)+G_t(N)+H_t(N)
	\end{align}

	Let us underline the fact that the probability distribution of these terms do not depend on the specific neuron $i$ in population $\alpha$ considered. We are interested in the limit, as $N$ goes to infinity, of the quantity $\E\{\Exp [\sup_{-\tau \leq s\leq T} \vert X^{i,N}_s - \bar{X}^i_s\vert^2 ]\}$. We decompose this expression into the sum of the eight terms involved in equation \eqref{eq:EightTerms} using H\"older's inequality and upperbound each term separately. The terms $A_t(N)$ and $B_t(N)$ are treated exactly as in the proof of theorem \ref{thm:ExistenceUniqueness}, and the control all the terms except $E_t$ and $H_t$ essentially uses on the same ingredients as in the proof of theorem~\ref{thm:ExistenceUniqueness}. Similarly to the analysis performed in the proof of theorem~\ref{thm:ExistenceUniqueness}, we have the following inequalities:
	\begin{equation*}
		\begin{cases}
			\Exp[\sup_{0\leq s\leq t\wedge \tau_U} \vert A_s(N) \vert^2] & \leq K^2\, T\, \int_0^{t\wedge \tau_U} \Exp[\sup_{-\tau\leq u\leq s} \vert X_u^{i,N}-\bar{X}_u^i\vert^2 ]\, du\\
			\Exp[\sup_{0\leq s\leq t\wedge \tau_U} \vert B_s(N) \vert^2] & \leq 4\, K^2\, \int_0^{t\wedge \tau_U} \Exp[\sup_{-\tau\leq u\leq s} \vert X_u^{i,N}-\bar{X}_u^i\vert^2 ]\, du\\
			\Exp[\sup_{0\leq s\leq t\wedge \tau_U} \vert C_s(N) \vert ^2] & \leq T L^2 P^2  \int_0^{t\wedge \tau_U}  \Exp[\sup_{-\tau\leq u\leq s}\vert X^{i,N}_u-\bar{X}^i_u \vert^2] \, ds\\
			\Exp[\sup_{0\leq s\leq t\wedge \tau_U} \vert D_s(N) \vert ^2] & \leq T\,L^2 P^2  \int_0^{t\wedge \tau_U} \max_{j=1\cdots N} \Exp[\sup_{-\tau \leq u \leq s}\vert X^{j,N}_u-\bar{X}^j_u \vert^2] \, ds\\
			\Exp[\sup_{0\leq s\leq t\wedge \tau_U}\vert F_s(N) \vert ^2] &\leq 4 L^2 P^2  \int_0^t  \Exp[\sup_{-\tau\leq u\leq s}\vert X^{i,N}_u-\bar{X}^i_u \vert^2] \, ds\\
			\Exp[\sup_{0\leq s\leq t\wedge \tau_U}\vert G_s(N) \vert ^2] &\leq 4 L^2 P^2  \int_0^t \max_{j=1\cdots N} \Exp[\sup_{-\tau\leq  u \leq s}\vert X^{j,N}_u-\bar{X}^j_u \vert^2] \, ds
		\end{cases}
	\end{equation*}
	Let us for instance treat the case of $F_t (N)$ for the sake of clarity, all other terms are treated along the same lines. We have:
	\begin{align*}
	\Exp[\sup_{0 \leq s\leq t}\vert F_s(N) \vert ^2] &\leq P \sum_{\gamma=1}^P \Exp\bigg[\sup_{0 \leq s\leq t} \Big\vert \int_0^s \frac 1 {N_{\gamma}} \sum_{p(j)=\gamma} \beta_{\alpha\gamma}(X^{i,N}_u,X^{j,N}_{u-\tau_{ij}})\\
	& \quad -\beta_{\alpha\gamma}(\bar{X}^{i}_u,X^{j,N}_{u-\tau_{ij}}) dB^{i\gamma}_u\Big\vert^2\bigg]\\
	\text{(BDG)}&\leq 4P \sum_{\gamma=1}^P \Exp\left[\int_0^t \left\vert\frac 1 {N_{\gamma}} \sum_{p(j)=\gamma} \beta_{\alpha\gamma}(X^{i,N}_s,X^{j,N}_{s-\tau_{ij}})-\beta_{\alpha\gamma}(\bar{X}^{i}_s,X^{j,N}_{s-\tau_{ij}}) \right\vert^2d_s\right]\\
	\text{(Cauchy-Schwartz)}&\leq 4P \sum_{\gamma=1}^P \int_0^t \frac 1 {N_{\gamma}}  \sum_{p(j)=\gamma} \Exp\left[\left\vert \beta_{\alpha\gamma}(X^{i,N}_s,X^{j,N}_{s-\tau_{ij}})-\beta_{\alpha\gamma}(\bar{X}^{i}_s,X^{j,N}_{s-\tau_{ij}})\,\right\vert^2\right]\,ds\\
	\text{(assumption~\ref{Assump:LocLipschb})} &\leq 4P^2 L^2 \int_0^t \Exp\left[\left\vert X^{i,N}_s-\bar{X}^i_s\right\vert^2\right]\,ds\\
	&\leq 4 P^2 L^2  \int_0^t \Exp\left[\sup_{-\tau \leq u\leq s}\vert X^{i,N}_u-\bar{X}^i_u \vert^2\right] \, ds.
	\end{align*}

	In the decomposition~\eqref{eq:EightTerms}, the two terms $E_t(N)$ and $H_t(N)$ are of a new form and remain to be evaluated. Both term involves the difference between an empirical mean of a function of processes and an expectation term, and all have bounded second moment thanks to proposition~\ref{lem:SoluL2} and assumption~\ref{Assump:bBound}. We have:
	\begin{align}
		\nonumber \E \{ \Exp[ \sup_{0\leq s\leq t} \vert E_s(N) \vert ^2]\} &=\E\bigg\{\Exp \Big[ \sup_{0\leq s\leq t} \Big\vert \int_0^s \sum_{\gamma=1}^P \Big( \frac 1 {N_{\gamma}} \sum_{p(j)=\gamma} b_{\alpha\gamma}(\bar{X}^i_s,\bar{X}^j_{s-\tau_{ij}}) -\int_{-\tau}^0\Exp_Z[b_{\alpha\gamma}(\bar{X}^i_s,Z_{s+u}^{\gamma})]\,d\eta_{\alpha\gamma}(u) \Big) \, ds\Big\vert^2\Big]\bigg\}\\
		\label{eq:EtHtControl}& \leq T P  \sum_{\gamma=1}^P \int_0^t \E\Big\{ \Exp\bigg[\bigg \vert \frac 1 {N_{\gamma}} \sum_{p(j)=\gamma} b_{\alpha\gamma}(\bar{X}^i_s,\bar{X}^j_{s-\tau_{ij}})- \int_{-\tau}^0\Exp_Z[b_{\alpha\gamma}(\bar{X}^i_s,Z_{s+u}^{\gamma})\, d\eta_{\alpha\gamma}(u)] \bigg\vert^2\bigg] \, ds\Big\}
	\end{align}
	and using Burkholder-Davis-Gundy
	\[
		\E\Big\{\Exp[\sup_{s\leq t}\vert H_s(N) \vert ^2]\Big\} \leq 4 P  \sum_{\gamma=1}^P \int_0^t  \E\Big\{\Exp[\bigg \vert \frac 1 {N_{\gamma}} \sum_{p(j)=\gamma} \beta_{\alpha\gamma}(\bar{X}^i_s,\bar{X}^j_{s-\tau_{ij}})- \int_{-\tau}^0\Exp_Z[\beta_{\alpha\gamma}(\bar{X}^i_s,Z_{s+u}^{\gamma}) \, d\eta_{\alpha\gamma}(u)] \bigg\vert^2]\Big\}\, ds
		\]
		Each of these expressions involve the processes $X^j_{t-\tau_{ij}}$ which, under the tensor product law of the Brownian motion and the delays, are independent and identically distributed population-wise. Moreover, we have:
	\begin{multline*}
		\E\Big\{\Exp[\vert \frac 1 {N_{\gamma}} \sum_{p(j)=\gamma} \Theta(\bar{X}^i_s,\bar{X}^j_{s-\tau_{ij}})-\int_{-\tau}^0 \Exp_Z[\Theta(\bar{X}^i_s,Z_{s+u}^{\gamma})]d\eta_{\alpha\gamma}(u)]\vert^2]\Big\} \\= \frac 1 {N_{\gamma}^2} \sum_{k,l =1}^{N_{\gamma}} \E\Big\{\Exp\Big[(\Theta(\bar{X}^i_s,\bar{X}^j_{s-\tau_{ij}})-\Exp_{Z,\tilde{\tau}_{\alpha\gamma}}[\Theta(\bar{X}^i_s,Z_{s-\tilde{\tau}_{\alpha\gamma}}^{\gamma})])^T \cdot (\Theta(\bar{X}^i_s,\bar{X}^k_{s-\tau_{ij}})-\Exp_{Z,\tilde{\tau}}[\Theta(\bar{X}^i_s,Z_{s-\tilde{\tau}_{\alpha\gamma}}^{\gamma})])\Big]\Big\}
	\end{multline*}
	where $\Theta\in\{b_{\alpha\gamma}, \beta_{\alpha\gamma}\} $
	, noting that the expectation $\E\{\Exp_{Z}[\cdot]\}$ of the composed random variable $\Theta(\bar{X}^i_s,\bar{X}^j_{s-\tau_{ij}})$ under the law of $\bar{X}^j$ and of the delays $\tilde{\tau}=\{\tau_{ij}\}$ is precisely equal to $\int_{-\tau}^0 \Exp_Z[\Theta(\bar{X}^i_s,Z_{s+u}^{\gamma})]d\eta_{\alpha\gamma}(u)]$.
	 In the above expression, $\tilde{\tau}_{\alpha\gamma}$ denotes a random variable with law $\eta_{\alpha\gamma}$ independent of the sequence of delays and of the Brownian motions. It is now easy to show that all the terms of the sum corresponding to indexes $j$ and $k$ such that the three conditions $j\neq i$, $k\neq i$ and $j\neq k$ are satisfied are null. The simpler way to see this property is to write the expectations as integrals with respect to the measure $m^{\alpha}_t$ and observe that all terms annihilate, using the mutual independence of the processes $(\bar{X}^j)$, of the random variables $(\tau_{ij})_{j\in\{1\cdots N\}}$,
	 and the independence between the processes and the delays.
	Therefore, there are no more than $3\,N_{\gamma}-1$ non-null terms in the sum (in the case $\alpha=\gamma$ there are just $N_{\gamma}$ non-null terms), and moreover, all of these terms are uniformly bounded. The terms related to indexes $j =k \neq i $ satisfy the inequality:
	\begin{align*}
		\E\left\{\Exp\left[\left \vert\Theta(\bar{X}^i_s,\bar{X}^j_{s-\tau_{ij}})-\Exp_{Z,\tilde{\tau}_{\alpha\gamma}}[\Theta(\bar{X}^i_s,Z_{s-\tilde{\tau}_{\alpha\gamma}}^{\gamma})]\right \vert^2\right]\right\} &\leq 2\,\E\left\{\Exp\left[\left \vert\Theta(\bar{X}^i_s,\bar{X}^j_{s-\tau_{ij}})\right \vert^2 + \left \vert\Exp_{Z,\tilde{\tau}_{\alpha\gamma}}[\Theta(\bar{X}^i_s,Z_{s-\tilde{\tau}_{\alpha\gamma}}^{\gamma})]\right \vert^2\right]\right\}\\
		&\leq 4 \tilde{K} (1+C_2(s))
	\end{align*}
	where $C_2(s)$ is the upperbound of the $\mathbbm{L}^2$ norm of $\bar{X}^{i}$ given by lemma~\ref{lem:SoluL2}. The terms related to the cases $j=i$ (or symmetrically $k=i$) are bounded by the same constant, since we have for all $k$ such that $p(k)=\alpha$, by Cauchy-Schwartz inequality:
	\begin{align*}
		& \E\left\{\Exp\left[\left (\Theta(\bar{X}^i_s,\bar{X}^i_{s-\tau_{ij}})-\Exp_{Z,\tilde{\tau}_{\alpha\gamma}}[\Theta(\bar{X}^i_s,Z_{s-\tilde{\tau}_{\alpha\gamma}}^{\alpha})]\right )^T \cdot \left (\Theta(\bar{X}^i_s,\bar{X}^k_{s-\tau_{ij}})-\Exp_{Z,\tilde{\tau}_{\alpha\gamma}}[\Theta(\bar{X}^i_s,Z_{s-\tilde{\tau}_{\alpha\gamma}}^{\alpha})]\right )^T \right] \right\}\\
		& \qquad \leq \E\left\{\Exp\left[\left \vert\Theta(\bar{X}^i_s,\bar{X}^i_{s-\tau_{ij}})-\Exp_{Z,\tau}[\Theta(\bar{X}^i_s,Z_{s-\tau}^{\alpha})]\right \vert^2\right]\right\}^{1/2} \E\left\{\Exp\left[\left \vert\Theta(\bar{X}^i_s,\bar{X}^k_{s-\tau_{ij}})-\Exp_{Z,\tau}[\Theta(\bar{X}^i_s,Z_{s-\tau}^{\alpha})]\right \vert^2\right]\right\}^{1/2}\\
		&\qquad \leq 4 \tilde{K} (1+C_2(s))
	\end{align*}

	We note $C= 4 \tilde{K} (1+C_2(T))$. We have:
	\[\E\left\{\Exp[\sup_{0\leq s\leq t}\vert E_s(N) \vert ^2]\right\} \leq T^2P C \sum_{\gamma=1}^P\frac{3N_{\gamma}-2}{N_{\gamma}^2} \leq 3\,T^2\,P^2\, C \frac{1}{\min_{\gamma}(N_{\gamma})}  \]
	and
	\[\E\left\{\Exp[\sup_{0\leq s\leq t}\vert H_s(N) \vert ^2]\right\} \leq 12\,T\,P^2\, C \frac{1}{\min_{\gamma}(N_{\gamma})}  \]

	Assembling all the estimates, we obtain
	\begin{align*}
		\max_{i=1\cdots N}\E\{\Exp[\sup_{0 \leq s\leq t} \vert X^{i,N}_s-\bar{X}^i_s\vert^2]\} & \leq 8\, \bigg\{(K^2+2 L^2 P^2 )\,(T+4) \int_0^t\max_{j=1\cdots N}\Exp\Big[\sup_{-\tau \leq u\leq s}\vert X^{j,N}_u - \bar{X}^j_u\vert^2\Big]\, du \\
		&\qquad + 3(T+4)T\frac{C\,P^2}{\min_{\gamma}(N_{\gamma})}\Big\}
	\end{align*}
	Let us denote $K'=8(K^2+2 L^2 P^2 )\,(T+4)$ and $C'=24 (T+4) C\,P^2\,$. Since $X^{i,N}_t$ and $\bar{X}^i_t$ are equal on $[-\tau,0]$, this inequality implies:
	\[\max_{i=1\cdots N}\Exp[\sup_{-\tau \leq s\leq t} \vert X^{i,N}_s-\bar{X}^i_s\vert^2] \leq K' \int_0^t\max_{j=1\cdots N}\Exp\Big[\sup_{-\tau \leq u\leq s}\vert X^{j,N}_u - \bar{X}^j_u\vert^2\Big]\, du + \frac{C'}{\min_{\gamma}(N_{\gamma})}\]
	Using Gronwall's inequality, we obtain:
	\begin{equation}\label{eq:Gron1}
		{\max_{i=1\cdots N}\Exp[\sup_{-\tau \leq s\leq t} \vert X^{i,N}_s-\bar{X}^i_s\vert^2] \leq \frac{C'}{\min_{\gamma}(N_{\gamma})} e^{K' T} }
	\end{equation}
	which tends to zeros as $N$ goes to infinity by assumption. This property implies the convergence in law of $(X^{i,N}_{t\wedge \tau_U}, -\tau\leq t\leq T)$ towards $(\bar{X}^{i}_{t}, -\tau\leq t\leq T)$.
%
	%
\end{proof}

\begin{corollary}[Propagation of chaos]\label{cor:PropaChaos}
	Let $l\in \N^*$ and fix $l$ neurons $(i_1,\cdots,i_l) \in \N^*$. Under the assumptions of theorem~\ref{thm:PropagationChaos}, the law of $(X^{i_1,N}_t, \cdots, X^{i_l,N}_t, -\tau\leq t \leq T)$ converges towards $m^{p(i_1)}\otimes \cdots \otimes m^{p(i_l)}$.
\end{corollary}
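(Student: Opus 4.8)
The plan is to deduce the $l$-neuron statement directly from the single-neuron estimate of Theorem~\ref{thm:PropagationChaos}, the only genuinely new ingredient being the independence of the limiting family. First I would record that the coupled processes $\bar{X}^{i_1},\dots,\bar{X}^{i_l}$ defined in~\eqref{eq:Coupling} are mutually independent. Indeed, each $\bar{X}^{i_k}$ solves the \emph{same} McKean--Vlasov equation~\eqref{eq:MFE}, whose coefficients involve only the deterministic family of laws $(m^{\gamma}_t)_{\gamma}$ and \emph{not} the other processes; by the pathwise uniqueness established in Theorem~\ref{thm:ExistenceUniqueness}, $\bar{X}^{i_k}$ is a measurable functional of the data $\big(\zeta^{i_k}_0,(W^{i_k}_t),(B^{i_k\gamma}_t)_{\gamma}\big)$ alone. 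Since these data are independent across $k$ (chaotic initial condition, independence of the driving Brownian motions), the path vector $(\bar{X}^{i_1},\dots,\bar{X}^{i_l})$ has law $m^{p(i_1)}\otimes\cdots\otimes m^{p(i_l)}$ on $C([-\tau,T],E)^{l}$.

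Next I would transfer the convergence from each coordinate to the vector. Summing the bound~\eqref{eq:Propchaos} of Theorem~\ref{thm:PropagationChaos} over the finitely many indices $i_1,\dots,i_l$ gives
\[
\E\Big\{\Exp\Big[\sup_{-\tau\le s\le T}\sum_{k=1}^{l}\big\vert X^{i_k,N}_s-\bar{X}^{i_k}_s\big\vert^{2}\Big]\Big\}\;<\;\frac{l\,C(T)}{\min_{\gamma}(N_{\gamma})}\;\xrightarrow[N\to\infty]{}\;0,
\]
because $N_{\gamma}\to\infty$ for every $\gamma$. Hence the path vector $(X^{i_1,N}_\cdot,\dots,X^{i_l,N}_\cdot)$ converges to $(\bar{X}^{i_1}_\cdot,\dots,\bar{X}^{i_l}_\cdot)$ in $L^{2}$, and therefore in probability, for the topology of uniform convergence on $[-\tau,T]$ (when $f$ and $g$ are only locally Lipschitz one invokes instead the unquantified convergence already provided by Theorem~\ref{thm:PropagationChaos}, after the usual localization argument).

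Finally, convergence in probability on the Polish space $C([-\tau,T],E)^{l}$ implies convergence in law, and combining this with the identification of the limiting law obtained in the first step yields that the law of $(X^{i_1,N}_t,\dots,X^{i_l,N}_t,\,-\tau\le t\le T)$ converges weakly to $m^{p(i_1)}\otimes\cdots\otimes m^{p(i_l)}$, which is the assertion. I expect no real obstacle here: all the analytic effort has been spent in Theorem~\ref{thm:PropagationChaos}, and the only point requiring care is the rigorous justification of the independence of the coupled family $(\bar{X}^{i_k})$ — that is, that the solution map of~\eqref{eq:MFE} genuinely factorizes through the individual driving data and does not re-introduce any coupling between distinct indices through the nonlinear interaction term. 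The same argument carries over verbatim to the non translation-invariant case, upon averaging over the delay distributions as discussed at the end of Section~\ref{sec:Setting}.
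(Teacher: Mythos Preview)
Your proposal is correct and follows essentially the same approach as the paper: sum the single-neuron $L^2$ estimate~\eqref{eq:Propchaos} over the $l$ indices to obtain convergence of the vector $(X^{i_1,N},\dots,X^{i_l,N})$ to the coupled family $(\bar{X}^{i_1},\dots,\bar{X}^{i_l})$, whose law is the product $m^{p(i_1)}\otimes\cdots\otimes m^{p(i_l)}$. Your argument is in fact slightly more careful than the paper's, since you spell out explicitly why the coupled processes are independent (via pathwise uniqueness and the measurability of the solution map in the individual driving data), a point the paper simply asserts.
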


\begin{proof}
	We have, using the notations of the proof of theorem~\ref{thm:PropagationChaos},
	\begin{align*}
		& \E\left\{\Exp \left[ \sup_{-\tau\leq t \leq T} \left\vert (X^{i_1,N}_t, \cdots, X^{i_l,N}_t) - (\bar{X}^{i_1}_t, \cdots, \bar{X}^{i_l}_t)\right\vert^2 \right]\right\} \\
		& \qquad \leq \sum_{k=1}^l \E\left\{\Exp \left[ \sup_{-\tau\leq t \leq T\wedge\tau_U} \left\vert X^{i_k,N}_t-\bar{X}^{i_k}_t\right\vert^2 \right]\right\}\\
		& \qquad \leq \frac{l\,C'}{K' {\min_{\gamma}(N_{\gamma})}} e^{K' T}
	\end{align*}
	which tends to zero as $N$ goes to infinity, hence the law of $(X^{i_1,N}_t, \cdots, X^{i_l,N}_t, -\tau\leq t \leq T)$  converges towards that of $(\bar{X}^{i_1}_t, \cdots, \bar{X}^{i_l}_t, -\tau\leq t \leq T\wedge \tau_U)$. This property implies the convergence in law of $(X^{i_1,N}_t, \cdots, X^{i_l,N}_t, -\tau\leq t \leq T)$ towards $(\bar{X}^{i_1}_t, \cdots, \bar{X}^{i_l}_t, -\tau\leq t \leq T)$ whose law is equal to $m^{p(i_1)}\otimes \cdots \otimes m^{p(i_l)}$.
\end{proof}

These two results readily yields the announced results on the quenched propagation of chaos and convergence for almost all realization of the delays in the translation invariant case. Let us now deal with the averaged convergence of the process in the non translation invariant case.

\subsection{Averaged propagation of chaos in the non-translation invariant case}
We now consider the case of non-translation invariant neural fields. In that case, the distribution of delays depends on the properties of the particular neuron considered (see e.g. the case of neurons in a box treated in section~\ref{sec:FiringRate}). Let us consider neuron $i$ in population $\alpha$. The delays $\tau_{ij}$ between neuron $i$ and neuron $j$ of population $\gamma$ has a distribution given by $\eta_{i\gamma}$ and these different laws, averaged across all possible choices of location of neuron $i$, has the law given by $\eta_{\alpha\gamma}$.

In the previous section, we were able to take into account possible correlations between delays: indeed, the only property used was that for any fixed neuron, the sequence of delays $(\tau_{ij})_{j\in\{1\cdots N\}}$ were independent, whatever possible correlations between $\tau_{ij}$ and $\tau_{kl}$ for $k\neq i$. We make the same assumption here. We denote by $\E_i$ the expectation over all possible choices of locations for neuron $i$, i.e. all possible distributions $\eta_{i\gamma}$. Following the same lines as done in the quenched heterogeneity case for translation invariant systems, we show the following:

\begin{theorem}\label{thm:AveragedPropaChaos}
		Under the assumptions (H1)-(H4) and the chaotic initial condition assumption, the process $(X^{i,N}_t, -\tau \leq t\leq T)$, solution of the network equations~\eqref{eq:Network} averaged across all possible values of the delays converge towards the process $(\bar{X}^i_t, -\tau \leq t\leq T)$ solution of the mean-field equations~\eqref{eq:Coupling}. This implies in particular convergence in law of the process $(\E_i[X^{i,N}_t], -\tau \leq t\leq T)$ towards $(\bar{X}^{\alpha}_t, -\tau \leq t\leq T)$ solution of the mean-field equations~\eqref{eq:MFE}. Moreover, since $f$ and $g$ are globally Lipschitz-continuous, we have for any $i\in \N$ and any $T>0$:
	\begin{equation}\label{eq:PropchaosAveraged}
		\max_{i=1\cdots N} \E\left\{\Exp\Big [\sup_{-\tau \leq s\leq T} \vert \E_i[X^{i,N}_s] - \bar{X}^i_s\vert^2 \Big]\right\}< \frac{C(T)}{\min_{\gamma}(N_{\gamma})}
	\end{equation}
	where $C(T)$ is a constant only depending on the parameters of the system and the time horizon $T$.
\end{theorem}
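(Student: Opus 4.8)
The plan is to mimic almost verbatim the coupling argument of Theorem~\ref{thm:PropagationChaos}, the only difference being that the randomness of the delay \emph{distribution} $\eta_{i\gamma}$ (not just of the delays $\tau_{ij}$ themselves) is now integrated out through the extra expectation $\E_i$. First I would define, exactly as in \eqref{eq:Coupling}, the coupled mean-field processes $\bar{X}^i$ driven by the same Brownian motions $W^i,B^{i\gamma}$ and the same chaotic initial condition $\zeta^i_0$ as neuron $i$ in the network, but now with the \emph{averaged} delay measure $\eta_{\alpha\gamma}$ appearing in the interaction kernels; these $(\bar{X}^i)$ are i.i.d.\ with law $m^{\alpha}$. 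Then I would write the difference $X^{i,N}_t-\bar{X}^i_t$ as the same eight-term decomposition \eqref{eq:EightTerms}, with $A_t(N),\dots,H_t(N)$ defined identically. Applying $\E\{\Exp[\sup_{-\tau\le s\le t}|\cdot|^2]\}$ and then, crucially, the outer expectation $\E_i$ over the choice of location of neuron~$i$, I would bound the six ``Lipschitz'' terms $A,B,C,D,F,G$ exactly as before: for these the bounds of Theorem~\ref{thm:PropagationChaos} hold pointwise in the realization of $(\eta_{i\gamma})$, so averaging over $\E_i$ changes nothing, and one recovers the same $\int_0^t \max_j \Exp[\sup_{-\tau\le u\le s}|X^{j,N}_u-\bar X^j_u|^2]\,du$ contributions with constants $K^2$ and $L^2P^2$.

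The two terms $E_t(N)$ and $H_t(N)$ — the empirical-mean-versus-expectation terms — are where the argument genuinely uses the extra averaging. As in the proof of Theorem~\ref{thm:PropagationChaos} I would expand $\E_i\{\E\{\Exp[|N_\gamma^{-1}\sum_{p(j)=\gamma}\Theta(\bar X^i_s,\bar X^j_{s-\tau_{ij}})-\int_{-\tau}^0\Exp_Z[\Theta(\bar X^i_s,Z^\gamma_{s+u})]\,d\eta_{\alpha\gamma}(u)|^2]\}\}$ into the double sum over indices $k,l$. The key point is that once we average over the location of neuron~$i$, the random variable $\Theta(\bar X^i_s,\bar X^j_{s-\tau_{ij}})$ has, \emph{under the combined law of the Brownian motions, of the delays $(\tau_{ij})_j$, and of the delay-distribution $\eta_{i\gamma}$}, conditional expectation given the $\bar X$'s exactly equal to $\int_{-\tau}^0\Exp_Z[\Theta(\bar X^i_s,Z^\gamma_{s+u})]\,d\eta_{\alpha\gamma}(u)$ — because $\E_i$ of $\eta_{i\gamma}$ is by definition $\eta_{\alpha\gamma}$. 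Hence, using the conditional independence of $(\bar X^j)_{j}$, of the delays $(\tau_{ij})_j$, and of the delay-distributions, all cross terms with $j\ne i$, $k\ne i$, $j\ne k$ vanish, leaving at most $3N_\gamma-1$ diagonal-type terms (just $N_\gamma$ when $\alpha=\gamma$), each bounded by $4\tilde K(1+C_2(s))$ via Assumption~\ref{Assump:bBound} and Lemma~\ref{lem:SoluL2}, exactly as before. This gives $\E_i\{\E\{\Exp[\sup_{s\le t}|E_s(N)|^2]\}\}\le 3T^2P^2C/\min_\gamma N_\gamma$ and $\E_i\{\E\{\Exp[\sup_{s\le t}|H_s(N)|^2]\}\}\le 12TP^2C/\min_\gamma N_\gamma$ with $C=4\tilde K(1+C_2(T))$.

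Assembling the eight estimates via H\"older's $|{\cdot}|^2\le 8\sum|{\cdot}|^2$ inequality, one obtains
\[
\max_{i}\E_i\{\E\{\Exp[\sup_{-\tau\le s\le t}|X^{i,N}_s-\bar X^i_s|^2]\}\}
\le K'\!\int_0^t \max_j\,(\cdot)(s)\,ds + \frac{C'}{\min_\gamma N_\gamma},
\]
with $K'=8(K^2+2L^2P^2)(T+4)$ and $C'=24(T+4)CP^2$; since the right-hand max is the same quantity, Gronwall's lemma yields the bound \eqref{eq:PropchaosAveraged}. Finally, since $|\E_i[X^{i,N}_s]-\bar X^i_s|^2\le \E_i[|X^{i,N}_s-\bar X^i_s|^2]$ by Jensen, the convergence of $\E_i[X^{i,N}_t]$ towards $\bar X^i_t$ (hence in law towards $\bar X^\alpha_t$) follows, and the $l$-neuron statement follows as in Corollary~\ref{cor:PropaChaos}. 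The one place requiring care — the ``hard part'' — is the vanishing of the cross terms: one must check that integrating over $\eta_{i\gamma}$ does not destroy the conditional-independence structure that makes the off-diagonal terms annihilate, i.e.\ that the delays $(\tau_{ij})_j$ remain exchangeable with common marginal $\eta_{\alpha\gamma}$ after averaging, and that $\bar X^j$ for $j\ne i$ stays independent of $\tau_{ij}$ and of $\eta_{i\gamma}$; this is exactly the content of the modeling assumption that, from the referential of neuron $i$, the delays are i.i.d.\ with law $\eta_{i\gamma}$ and that neuronal locations in distinct populations are independent.
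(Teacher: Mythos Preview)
Your proposal is correct and follows the paper's strategy, with one organizational difference worth flagging. The paper applies $\E_i$ to the network process \emph{before} decomposing, so it directly controls $\E_i[X^{i,N}_t]-\bar X^i_t$; the two critical terms then become $E'_t(N)$ and $H'_t(N)$ with $\E_i\{b_{\alpha\gamma}(\bar X^i_s,\bar X^j_{s-\tau_{ij}})\}$ sitting inside the sum over $j$, and the paper observes that this averaged random variable has expectation (under $\bar X^j$ and the delays) equal to the mean-field integral against $\eta_{\alpha\gamma}$, whence the off-diagonal terms vanish exactly as in Theorem~\ref{thm:PropagationChaos}. You instead keep the decomposition \eqref{eq:EightTerms} untouched, apply the full expectation $\E_i\E\Exp$ to the squared norm of each term, and recover the statement about $\E_i[X^{i,N}_s]$ at the very end via Jensen. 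Your route yields the marginally stronger bound $\max_i\E_i\E\{\Exp[\sup_s|X^{i,N}_s-\bar X^i_s|^2]\}\le C(T)/\min_\gamma N_\gamma$ and avoids having to push $\E_i$ through the nonlinear $f_\alpha,g_\alpha$ in the six ``Lipschitz'' terms; the paper's route places the averaging only where it is genuinely needed (in $E',H'$) and lands on \eqref{eq:PropchaosAveraged} without a final Jensen step. Both isolate the same crux you correctly identify---that averaging over neuron~$i$'s location gives the delays $\tau_{ij}$ marginal $\eta_{\alpha\gamma}$, restoring the independence structure that kills the cross terms---and both rest on the same modeling assumptions you spell out at the end.
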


\begin{proof}
The proof proceeds as the one of theorem~\ref{thm:PropagationChaos} by thoroughly controlling the distance between $\E_i[X^{i,N}_t]$ and $\bar{X}^i_t$ under the norm $\Vert Z\Vert^2=\Exp[\sup_{-\tau\leq t\leq T} \vert Z_s \vert^2]$. This distance is decomposed into the sum of eight terms analogous to the quenched translation invariant case, that are controlled exactly in the same manner as done in the proof of theorem~\ref{thm:PropagationChaos}, except two terms $E'_t(N)$ and $H'_t(N)$ analogous to $E_t(N)$ and $H_t(N)$. Regarding these terms, we have:
\begin{align}
	\nonumber\E\{\Exp[\sup_{0\leq s\leq t}\vert E_s'(N) \vert ^2]\} &=\E\Big\{\Exp[\sup_{0\leq s\leq t} \vert \int_0^s \sum_{\gamma=1}^P  \frac 1 {N_{\gamma}} \sum_{p(j)=\gamma} \E_i\big\{(b_{\alpha\gamma}(\bar{X}^i_s,\bar{X}^j_{s-\tau_{ij}})\big\}-\int_{-\tau}^0\Exp_Z[b_{\alpha\gamma}(\bar{X}^i_s,Z_{s+u}^{\gamma})] ) \,d\eta_{\alpha\gamma}(u)\, ds\vert^2]\Big\}\\
	\label{eq:EtHtPrimeControl}& \leq T P  \sum_{\gamma=1}^P \int_0^t \E\Big\{ \Exp[\bigg \vert \frac 1 {N_{\gamma}} \sum_{p(j)=\gamma} \E_i\big\{b_{\alpha\gamma}(\bar{X}^i_s,\bar{X}^j_{s-\tau_{ij}})\big\}- \int_{-\tau}^0\Exp_Z[b_{\alpha\gamma}(\bar{X}^i_s,Z_{s+u}^{\gamma})\, d\eta_{\alpha\gamma}(u)] \bigg\vert^2] \, ds\Big\}
\end{align}
and using Burkholder-Davis-Gundy
\[
	\E\Big\{\Exp[\sup_{s\leq t}\vert H'_s(N) \vert ^2]\Big\} \leq 4 P  \sum_{\gamma=1}^P \int_0^t  \E\Big\{\Exp[\bigg \vert \frac 1 {N_{\gamma}} \sum_{p(j)=\gamma} \E_i\big\{\beta_{\alpha\gamma}(\bar{X}^i_s,\bar{X}^j_{s-\tau_{ij}})\big\}- \int_{-\tau}^0\Exp_Z[\beta_{\alpha\gamma}(\bar{X}^i_s,Z_{s+u}^{\gamma}) \, d\eta_{\alpha\gamma}(u)] \bigg\vert^2]\Big\}\, ds.\]
	In that case again, the expectation of the process $\E_i\big\{\beta_{\alpha\gamma}(\bar{X}^i_s,\bar{X}^j_{s-\tau_{ij}})\big\}$ under $\bar{X}^j$ and the delays is $\int_{-\tau}^0\Exp_Z[\beta_{\alpha\gamma}(\bar{X}^i_s,Z_{s+u}^{\gamma}) \, d\eta_{\alpha\gamma}(u)]$, and hence developing this expression, we are left with at most $3N_{\gamma}$ terms that are bounded, and we can hence conclude on the convergence of the averaged law towards the mean-field equations, and obtain the announced speed of convergence.
\end{proof}

Let us now go deeper into the dynamics of these limit equations.

\section{Dynamics of the firing-rate model}\label{sec:FiringRate}
In the previous sections, we showed the convergence of network equations~\eqref{eq:Network} towards well-posed delayed McKean-Vlasov equations~\eqref{eq:MFE}. These equations form a system of implicit equations in the space of stochastic processes. Understanding the qualitative dynamics of the networks in the limit $N\to\infty$ is hence highly challenging in this general context. In order to go deeper into the analysis of the dynamics of the mean-field equations, we consider the dynamics of the classical firing-rate model.

\subsection{Mean-Field equations for the firing-rate model}
We now instantiate the dynamics of our neurons as firing-rate models. In this model the intrinsic dynamics $f_{\alpha}(x,t)=-x/\theta_{\alpha}+I_{\alpha}(t)$ is affine, the diffusion function $g_{\alpha}(x,t)$ is constant (equal to $\lambda_{\alpha}$) and in which the interactions only depend on a nonlinear transform of the membrane potential of presynaptic neurons: $b_{\alpha\gamma}(x,y)=\bar{J}_{\alpha\gamma} S(y)$ and $\beta_{\alpha\gamma}=\sigma_{\alpha\gamma}S(y)$. It is obvious that the assumptions of the above sections are satisfied by the firing-rate model. Therefore, when considering the delays $\eta_{i\gamma}$ only depending on $p(i)$, we have propagation of chaos and almost sure (quenched) convergence towards the mean-field equations:
\begin{multline}\label{eq:MFEWC}
	d\bar{X}^{\alpha}_t= \Big(-\frac{\bar{X}^{\alpha}_t}{\theta_{\alpha}} + I_{\alpha}(t) +  \sum_{\gamma=1}^P \bar{J}_{\alpha\gamma}\int_{-\tau}^0  \Exp_{\bar{X}}[S(\bar{X}^{\gamma}_{t+s})]d\eta_{\alpha\gamma}(s) \Big)\,dt \\
	+ \lambda_{\alpha}\,dW_t^{\alpha}
	+ \sum_{\gamma=1}^P\sigma_{\alpha\gamma}\int_{-\tau}^0 \Exp_{\bar{X}}[S(\bar{X}^{\gamma}_{t+s})]d\eta_{\alpha\gamma}(s) \,dB^{\alpha\gamma}_t
\end{multline}
and if the delay distributions $\eta_{i\gamma}$ depend on the choice of $i$, the same result hold in an averaged sense.

In these cases, we can reduce the dynamics as follows:
\begin{theorem}\label{thm:FiringRateGaussian}
	In the firing-rate model, solutions of the mean-field equations ~\eqref{eq:MFEWC} are exponentially fast attracted towards Gaussian solutions. When the initial conditions are Gaussian processes, the solution is Gaussian, with mean $\mu_{\alpha}$ and variance $v_{\alpha}$ satisfying a well-posed system of delayed differential equations:
	\begin{equation}\label{eq:DDE}
		\begin{cases}
			\dot{\mu}_{\alpha} &= -\frac{1}{\theta_{\alpha}}\mu_{\alpha}+\sum_{\gamma=1}^P\int_{-\tau}^0 f(\mu_{\gamma}(t+s),v_{\gamma}(t+s))d\eta_{\alpha\gamma}(s) + I_{\alpha}(t)\\
			\dot{v}_{\alpha} &= -\frac{2}{\theta_{\alpha}}v_{\alpha}+\sum_{\gamma=1}^P \sigma_{\alpha\gamma}^2\left(\int_{-\tau}^0 f(\mu_{\gamma}(t+s),v_{\gamma}(t+s))d\eta_{\alpha\gamma}(s)\right)^2 + \lambda_{\alpha}^2\\
		\end{cases}
	\end{equation}
	where $f(\mu,v)=\Exp[S(X)]$ for $X\sim \mathcal{N}(\mu,v)$.
\end{theorem}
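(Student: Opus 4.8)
The plan is to exploit the fact that in \eqref{eq:MFEWC} the law of the unknown enters \emph{only} through the deterministic functions $m_\gamma(t):=\Exp[S(\bar X^\gamma_t)]$ (well defined and bounded by \ref{Assump:bBound}, with $\bar X^\gamma_t=\zeta_0^\gamma(t)$ for $t\in[-\tau,0]$). Setting $h_\alpha(t):=I_\alpha(t)+\sum_{\gamma}\bar{J}_{\alpha\gamma}\int_{-\tau}^0 m_\gamma(t+s)\,d\eta_{\alpha\gamma}(s)$ and $\rho_{\alpha\gamma}(t):=\sigma_{\alpha\gamma}\int_{-\tau}^0 m_\gamma(t+s)\,d\eta_{\alpha\gamma}(s)$, equation \eqref{eq:MFEWC} becomes, component by component, a linear time-inhomogeneous Ornstein--Uhlenbeck SDE with deterministic coefficients, whose variation-of-constants representation on $[0,T]$ is
\[ \bar X^\alpha_t = e^{-t/\theta_\alpha}\bar X^\alpha_0 + G^\alpha_t, \qquad G^\alpha_t:=\int_0^t e^{-(t-u)/\theta_\alpha}h_\alpha(u)\,du + \int_0^t e^{-(t-u)/\theta_\alpha}\lambda_\alpha\,dW^\alpha_u + \sum_{\gamma}\int_0^t e^{-(t-u)/\theta_\alpha}\rho_{\alpha\gamma}(u)\,dB^{\alpha\gamma}_u. \]
The process $G^\alpha$ is Gaussian --- a deterministic function plus Wiener integrals of deterministic integrands --- and independent of $\bar X^\alpha_0$, so the \emph{entire} non-Gaussian content of $\bar X^\alpha_t$ ($t\ge0$) is carried by the single term $e^{-t/\theta_\alpha}\bar X^\alpha_0$. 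Itô's formula and this manipulation are licit since $\bar X$ is a semimartingale with bounded second moment, as established in Lemma~\ref{lem:SoluL2}.

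Both assertions of the theorem now follow from this decomposition. For the \emph{exponential attraction}: $\|\bar X^\alpha_t-G^\alpha_t\|_{L^2}=e^{-t/\theta_\alpha}\sqrt{\mathrm{Var}(\bar X^\alpha_0)}$, hence the $2$-Wasserstein distance of $\mathrm{law}(\bar X^\alpha_t)$ to the Gaussian law (say, of $\Exp[e^{-t/\theta_\alpha}\bar X^\alpha_0]+G^\alpha_t$) decays like $e^{-t/\theta_\alpha}$; combining this with the Lipschitz bound \ref{Assump:LocLipschb} gives $|m_\gamma(t)-f(\Exp[\bar X^\gamma_t],\mathrm{Var}[\bar X^\gamma_t])|\le Ce^{-t/\theta_\gamma}$, where $f(\mu,v):=\Exp[S(\mu+\sqrt v\,Z)]$, $Z\sim\mathcal N(0,1)$ --- i.e. the moment dynamics closes asymptotically. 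For the \emph{Gaussian case}: if $\zeta_0$ is a Gaussian process, then $(\zeta_0,W,B)$ is jointly Gaussian and $\bar X^\alpha_t$ is an affine measurable functional of it (the coefficients $h_\alpha,\rho_{\alpha\gamma}$ being deterministic), so $(\bar X^\alpha_t)_{-\tau\le t\le T,\ \alpha=1\dots P}$ is a Gaussian process; in particular every $\bar X^\gamma_{t+s}\sim\mathcal N(\mu_\gamma(t+s),v_\gamma(t+s))$ and $m_\gamma(t)=f(\mu_\gamma(t),v_\gamma(t))$ exactly.

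It remains to identify the system \eqref{eq:DDE} and show it is well posed. Taking expectations in \eqref{eq:MFEWC} and using $m_\gamma=f(\mu_\gamma,v_\gamma)$ yields the $\dot\mu_\alpha$ equation. For the variance, apply Itô's formula to $(\bar X^\alpha_t-\mu_\alpha(t))^2$ and take expectations: the stochastic integrals have zero mean, the cross-term $\Exp[(\bar X^\alpha_t-\mu_\alpha(t))\cdot(\text{deterministic})]$ vanishes, $\Exp[(\bar X^\alpha_t-\mu_\alpha(t))(-\bar X^\alpha_t/\theta_\alpha)]=-v_\alpha(t)/\theta_\alpha$, and the quadratic variation contributes $\lambda_\alpha^2+\sum_\gamma\sigma_{\alpha\gamma}^2\bigl(\int_{-\tau}^0 f(\mu_\gamma(t+s),v_\gamma(t+s))\,d\eta_{\alpha\gamma}(s)\bigr)^2$, giving the $\dot v_\alpha$ equation, with initial data on $[-\tau,0]$ the mean and variance of $\zeta_0$. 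As for well-posedness: $S$ is bounded and $L$-Lipschitz by \ref{Assump:LocLipschb}--\ref{Assump:bBound}, so $f$ is bounded, $L$-Lipschitz in $\mu$, and Lipschitz in $v$ uniformly on each set $\{v\ge v_0>0\}$ (Gaussian-convolution smoothing, $\partial_v f=\tfrac12\partial_\mu^2 f$); moreover the $v_\alpha$-components of any solution stay nonnegative ($\dot v_\alpha\ge-2v_\alpha/\theta_\alpha$ when $v_\alpha=0$) and, if $\lambda_\alpha>0$, are bounded below by $\tfrac{\theta_\alpha\lambda_\alpha^2}{2}(1-e^{-2t/\theta_\alpha})>0$, so the right-hand side of \eqref{eq:DDE} is effectively Lipschitz on the relevant region. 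Since the delay operators $\varphi\mapsto\int_{-\tau}^0\varphi(s)\,d\eta_{\alpha\gamma}(s)$ are bounded linear functionals on $C([-\tau,0])$, \eqref{eq:DDE} is a retarded functional differential equation with Lipschitz right-hand side, and existence and uniqueness on $[0,T]$ for every $T$ follow by Picard iteration on $C([-\tau,T],\mathbbm{R}^{2P})$ exactly as in Theorem~\ref{thm:ExistenceUniqueness} (boundedness of $f$ and Grönwall excluding blow-up).

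The point I expect to require the most care is the precise meaning of ``exponentially fast attracted towards Gaussian \emph{solutions}''. The decomposition above yields cleanly that $\mathrm{law}(\bar X^\alpha_t)$ converges to a Gaussian law in $W_2$ at rate $e^{-t/\theta_\alpha}$; to upgrade this into the statement that $\bar X$ is approximated, on any time window, by a \emph{bona fide} Gaussian solution of \eqref{eq:MFEWC}, one must control how the exponentially small non-Gaussian defect feeds back through the nonlinearity $S$. Concretely, one compares the true moments of $\bar X$ with the solution of \eqref{eq:DDE} issued from the Gaussianized history, obtaining a linear integral inequality with an $O(e^{-ct})$ forcing and Grönwall constants governed by the Lipschitz constant of $f$; on unbounded horizons the robust conclusion is the decay of the non-Gaussian part of $\bar X^\alpha_t$ rather than a decay of the distance between two solutions (which may fail in oscillatory regimes). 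The only other technical nuisance is the behaviour of $f$ near $v=0$ in the degenerate-noise case, dealt with by the lower bound on $v_\alpha$ above or a standard localisation argument.
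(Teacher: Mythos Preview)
Your proof is correct and follows essentially the same approach as the paper: the variation-of-constants representation, the identification of $G^\alpha_t$ as a Gaussian process (deterministic term plus Wiener integrals of deterministic integrands), and the observation that the only non-Gaussian content is $e^{-t/\theta_\alpha}\bar X^\alpha_0$, which decays exponentially. The paper's proof is much terser than yours --- it writes down the variation-of-constants formula, notes the decomposition, and simply states that ``taking the mean and covariance function of this process readily yields'' \eqref{eq:DDE}, without deriving the variance equation via It\^o's formula, without discussing well-posedness of the DDE system, and without the Wasserstein/feedback analysis you sketch for the exponential-attraction claim. Your additional care on these points (Lipschitzness of $f$, the lower bound on $v_\alpha$, the distinction between decay of the non-Gaussian part of the law and decay of the distance between two solutions) is well placed and goes beyond what the paper supplies.
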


\begin{proof}
	By the variation of constant formula we have:
	\begin{multline*}
		\bar{X}^{\alpha}_t= X^{\alpha}_0e^{-t/\theta_{\alpha}} + \int_0^t e^{-(t-s)/\theta_{\alpha}} \Big(I_{\alpha}(s) +  \sum_{\gamma=1}^P \bar{J}_{\alpha\gamma}\int_{-\tau}^0  \Exp_{\bar{X}}[S(\bar{X}^{\gamma}_{s+u})]d\eta_{\alpha\gamma}(u) \Big)\,ds \\
		+ \int_0^t e^{-(t-s)/\theta_{\alpha}} \lambda_{\alpha}\,dW_s^{\alpha}
		+  \sum_{\gamma=1}^P\sigma_{\alpha\gamma}\int_0^t e^{-(t-s)/\theta_{\alpha}} \int_{-\tau}^0 \Exp_{\bar{X}}[S(\bar{X}^{\gamma}_{s+u})]d\eta_{\alpha\gamma}(u) \,dB^{\alpha\gamma}_s.
	\end{multline*}
	The initial condition term $X^{\alpha}_0e^{-t/\theta_{\alpha}}$ vanishes exponentially fast. The other terms include deterministic terms and stochastic integrals with respect to Brownian motions of deterministic functions, hence are Gaussian. All but the initial condition term are hence Gaussian and the solutions are exponentially fast attracted towards Gaussian solutions. Taking the mean and covariance function of this process readily yields equations~\eqref{eq:DDE}.
\end{proof}

In the firing-rate case, we hence have an important reduction of complexity. We now exploit this simpler form of the mean-field equations to analyze the dynamics of the network using the theory of delayed differential equations (see e.g.~\cite{hale-lunel:93}) and compare it to numerical simulations. We particularly focus on the importance of the noise in the stochastic weights and on the distribution of delays.

\subsection{Distributed delays in a one-population network}
In order to analyze the impact of the distribution of delays on the dynamics of the network, we simplify the model considering a one-population network ($P=1$, we hence drop the population indexes since there is no ambiguity here). We further assume that $\mu=0$, or more precisely consider centered sigmoid functions ($S(0)=0$) and no input ($I=0$). In order to further simplify the system, we consider $S(x)=\erf(gx)$ where $\erf(x)=\frac{1}{\sqrt{2\pi}}\int_0^x e^{-x^2/2}$. In that case, integration by parts yields:
\[f(\mu,v)=\erf(\frac{g\mu}{\sqrt{1+g^2v}}).\]
In that simplified case, it is easy to see that a stationary solution is given by $(\mu^*,v^*)=(0,\lambda^2/2)$. Due to the quadratic form of the variance equation, the linearized equation on the variance is trivial $\dot{x}_{\alpha} = -\frac{2}{\theta_{\alpha}}x_{\alpha}$. Therefore, the level of noise in the connections do not come into play in that particular case, and the stability of the fixed point only depends on the characteristic equation related to the main, which is given by the dispersion relationship:
\[\xi=-\frac{1}{\theta} + \frac{{J}g}{ \sqrt{2\pi(1+g^2\lambda^2/2)}}\int_{-\tau}^0e^{\xi s}d\eta(s). \]
The solutions of this equations are the characteristic exponents of the system, and relate to the stability of the fixed point considered: if all characteristic exponents have negative real part, the equilibrium is asymptotically exponentially stable, and there exists a characteristic exponent with strictly positive real part, the equilibrium is unstable. Changes in the sign of the real part of the characteristic exponent constitute bifurcations of the system. Let us start by analyzing the presence of \emph{pitchfork} bifurcations that correspond to real characteristic exponents crossing the imaginary axis. Such bifurcations hence occur when the following relationship is satisfied:
\[0=-\frac{1}{\theta} + \frac{{J}g}{ \sqrt{2\pi(1+g^2\lambda^2/2)}} \]
and hence this bifurcation occurs independently of the choice of the delay. Turing-Hopf bifurcations occur when the system has a pair of complex conjugate characteristic exponents with non-zero imaginary part crossing the imaginary axis. Such bifurcations hence occur when there exists $\omega>0$ such that $\xi=\mathbf{i}\omega$, i.e.:
\[\mathbf{i}\omega=-\frac{1}{\theta} +\frac{{J}g}{ \sqrt{2\pi(1+g^2\lambda^2/2)}}\int_{-\tau}^0e^{\mathbf{i}\omega s}d\eta(s).\]
The existence of such bifurcations exist depending on the type of delay distribution. It is important to note at this point that there is no such bifurcation when delays are equal to zero.

\subsubsection{Dirac delta delays}
Assuming that the neuronal populations are highly concentrated so that delays can be considered constant (i.e. $\eta=\delta_{-\tau}$). The dispersion relationship can be easily solve and one finds that Hopf bifurcations exist only for
\[ \frac{\vert J\vert g}{\sqrt{2\pi(1+g^2\frac{\lambda^2}{2})}} >1,\]
i.e. for strong enough connectivity and small enough noise. Under that assumption is then easy to show that Turing-Hopf bifurcations arise when the parameters satisfy the relationship, when $J<0$:
\[
	\tau = \pi-\frac{1}{\omega} \arctan(\omega \theta)
\]
with $	\omega =\sqrt{\frac{J^2g^2}{2\pi(1+g^2\frac{\lambda^2}{2})} - 1}$. These result into oscillations of the solutions at a pulsation equal by $\omega$. Details of the calculations are left to the reader and follow the lines of the proof used for localized delays of the next section. Note that the value of the delay corresponding to the Hopf bifurcation is a function of $\lambda$. In particular, no Hopf  bifurcation is to be found for $\lambda > \lambda^*=\sqrt{2(J^2/(2\pi)-1/g^2)}$. The curve of Hopf bifurcations obtained in the plane $(\lambda,\tau)$ and simulations the moment equations are displayed in figure Fig.~\ref{fig:DeltaDelay}.
\begin{figure}[htbp]
	\centering
		\subfigure[Bifurcation diagram]{\includegraphics[width=.45\textwidth]{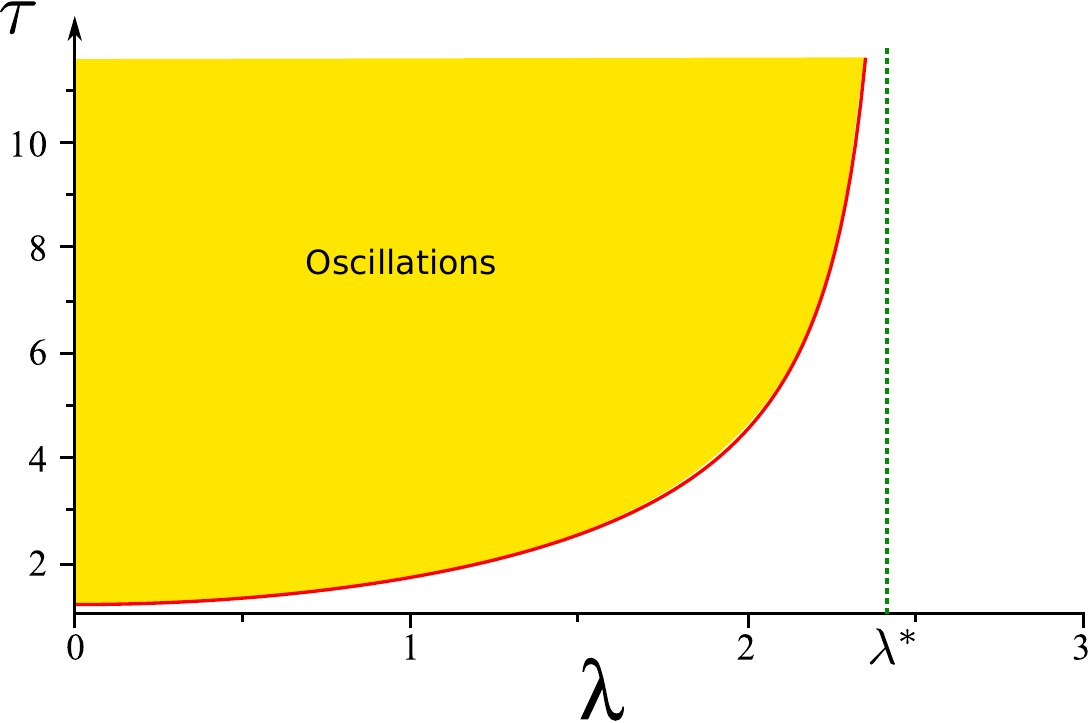}}
		\subfigure[$\lambda=0.5$, $\tau=1$]{\includegraphics[width=.45\textwidth]{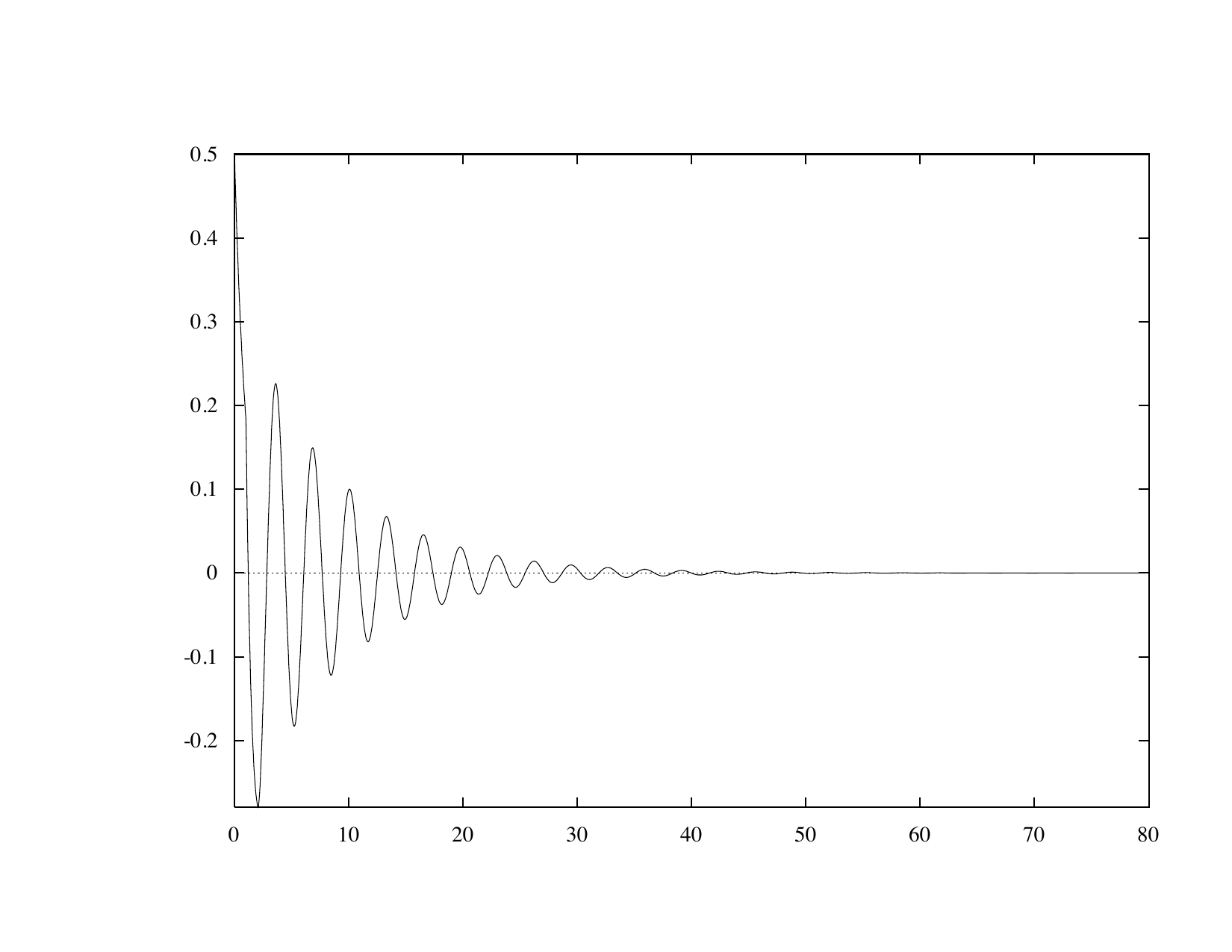}}
		\subfigure[$\lambda=0.5$, $\tau=1.5$]{\includegraphics[width=.45\textwidth]{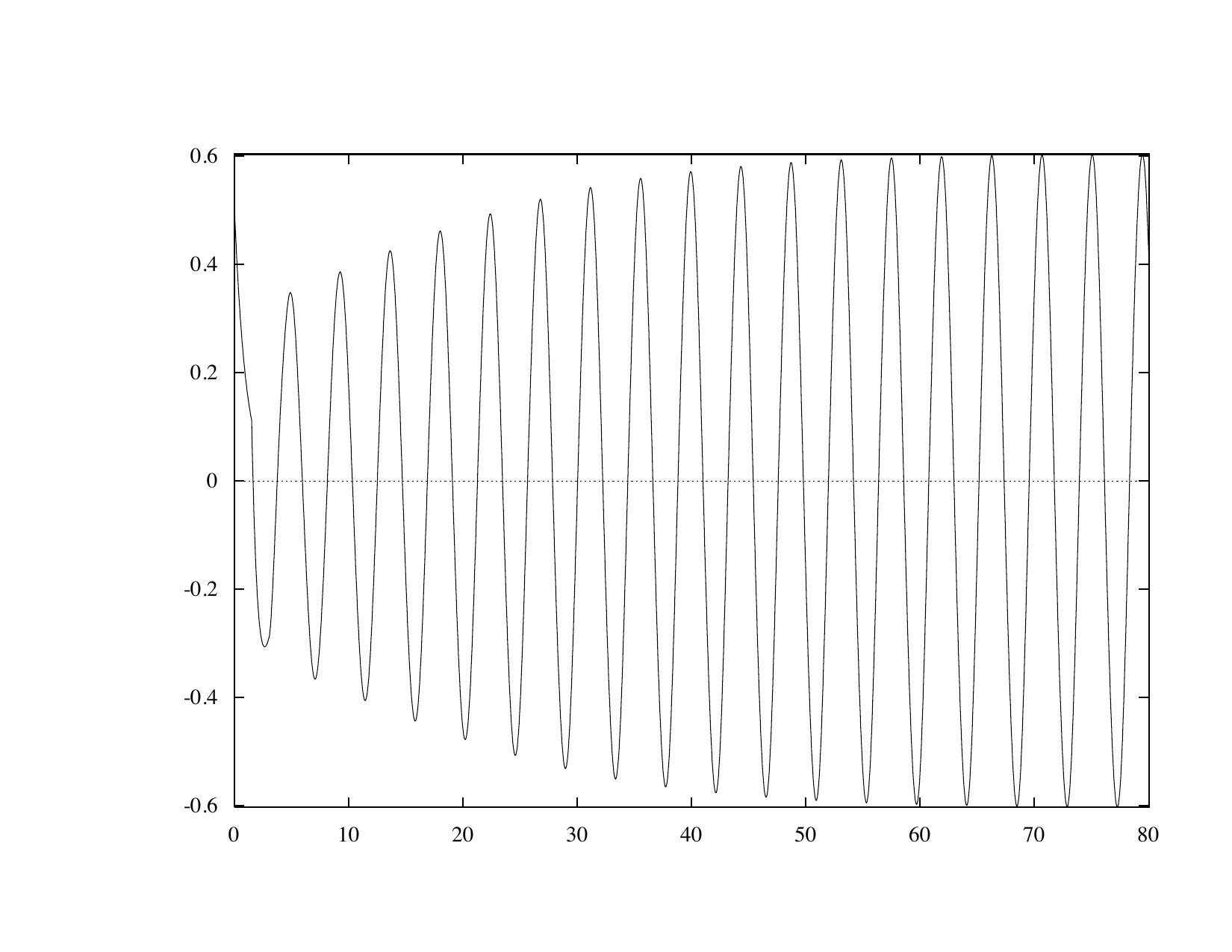}}
		\subfigure[$\lambda=1$, $\tau=1.5$]{\includegraphics[width=.45\textwidth]{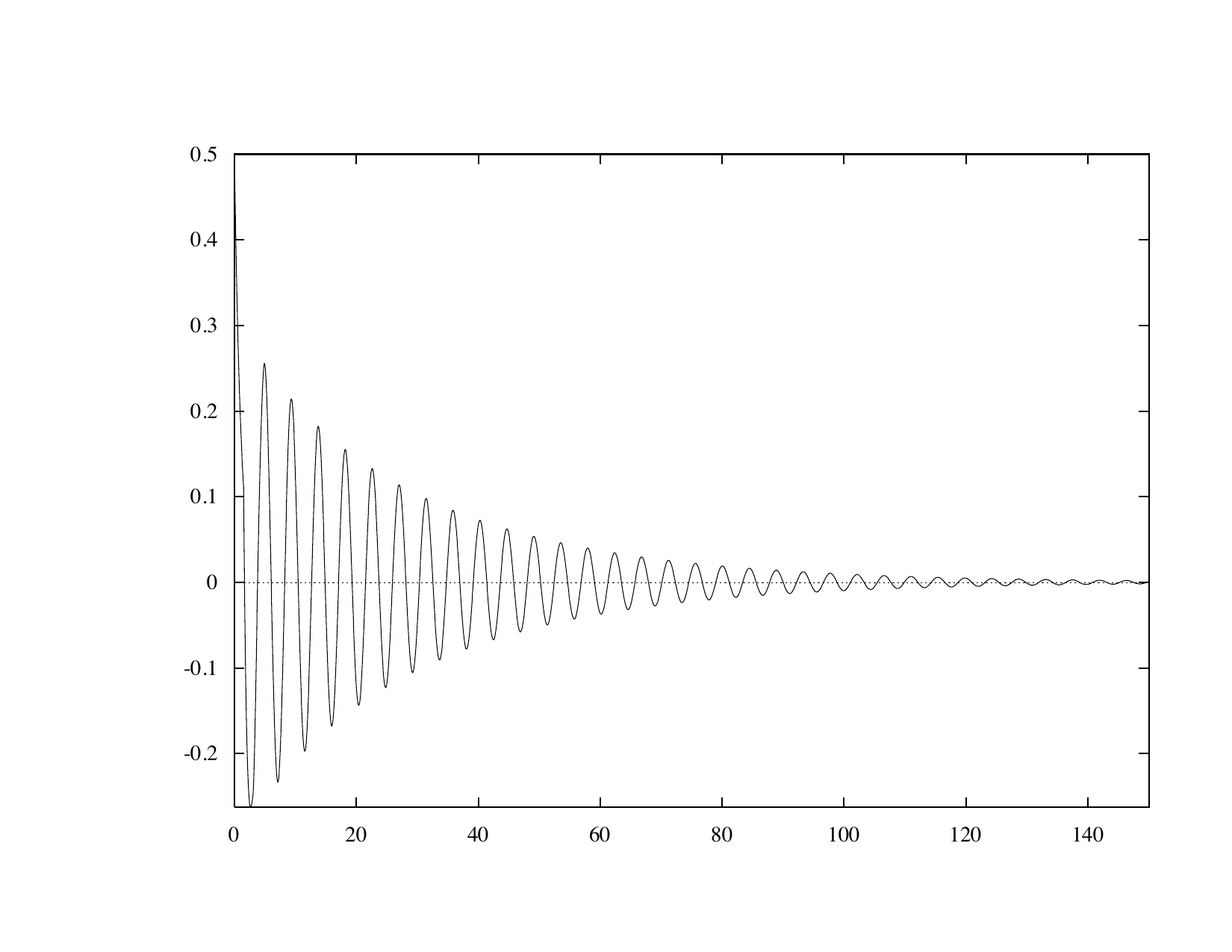}}
	\caption{Dirac delta distributed delays, $g=1$, $J=-2$, $\lambda^*\simeq 2.449$. For $\lambda=0.5$, the Hopf bifurcation occurs for a value of $\tau\simeq 1.33$ and for $\lambda=1$, the Hopf bifurcation occurs for $\tau\simeq 1.727$. }
	\label{fig:DeltaDelay}
\end{figure}
These simulations illustrate the fact that, depending on the noise and the delay chosen, the law of the mean-field equations has a mean that is either periodic or constant, yielding periodic or stationary laws of the mean-field equations. Since all neurons have the same probability distribution, in the periodic case all neurons will oscillate in phase.

{We now compare these analytical results to numerical simulations of the initial stochastic finite-size network given by equations~\eqref{eq:Network}. The trajectories of the reduced delayed dynamical system~\eqref{eq:DDE} were obtained using XPPAut software~\cite{ermentrout:02} that uses a classical fourth order Runge-Kutta method with adaptive timestep, and the stochastic trajectories of the network were obtained by direct numerical integration of equations~\eqref{eq:Network} using a vectorized implementation of the Euler-Maruyama integration scheme with time step $0.005$. These results are displayed in figure Fig.~\ref{fig:NetworkDeltaDelay}}.
\begin{figure}
	\centering
		\subfigure[$\lambda=0.5$, $\tau=1$]{\includegraphics[width=.3\textwidth]{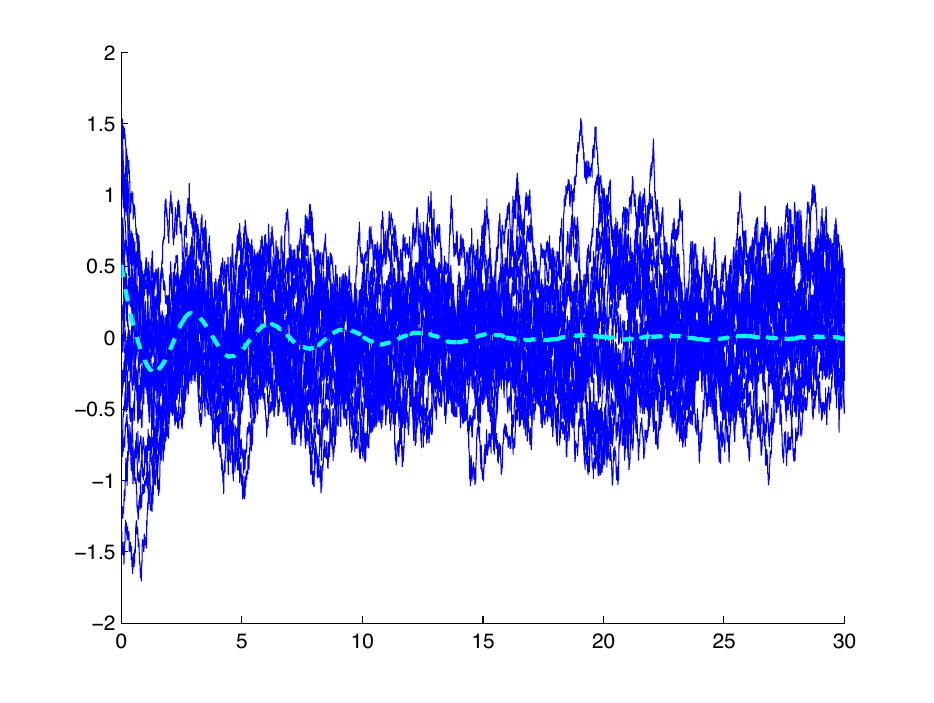}}
		\subfigure[$\lambda=0.5$, $\tau=1.5$]{\includegraphics[width=.3\textwidth]{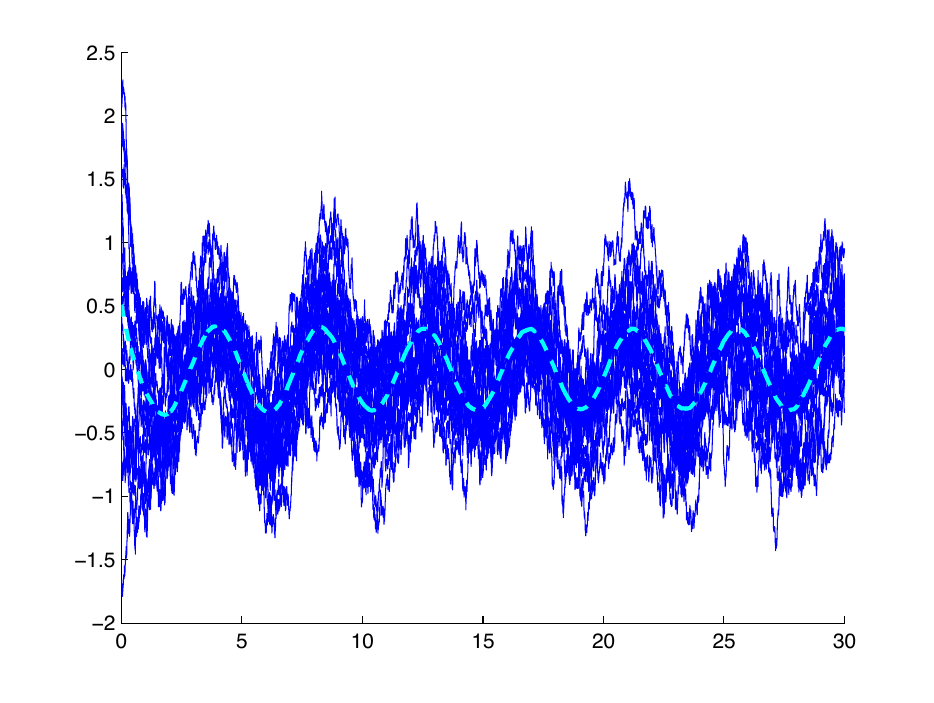}}
		\subfigure[$\lambda=1$, $\tau=1.5$]{\includegraphics[width=.3\textwidth]{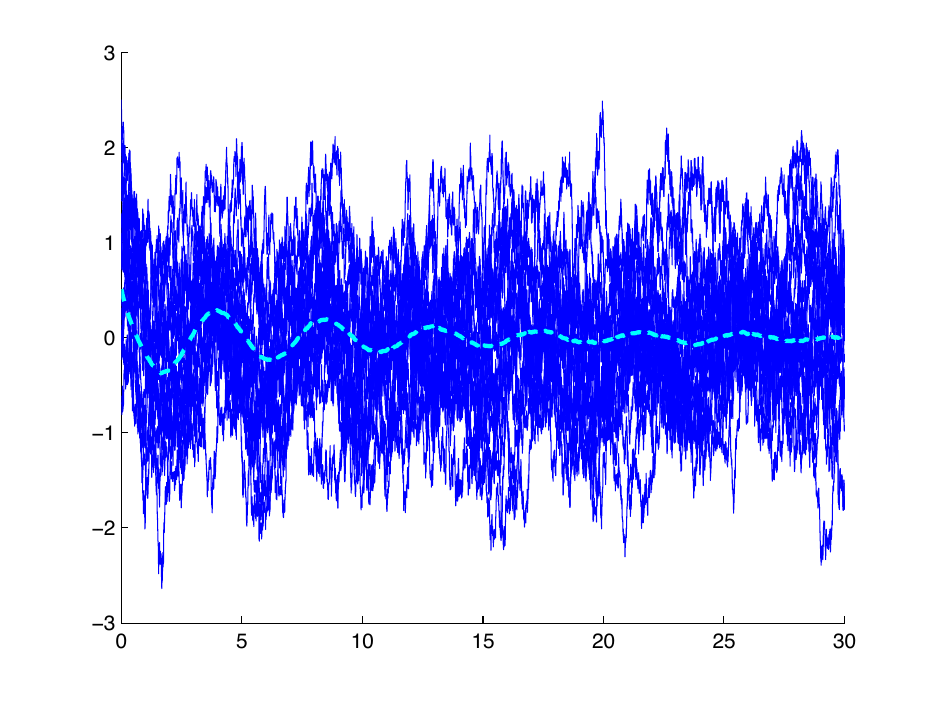}}
	\caption{Simulation of a $3\,000$ neurons network with Dirac delta distributed delays corresponding to the cases presented in Fig.~\ref{fig:DeltaDelay}. blue: $30$ arbitrarily chosen trajectories in the network, and cyan dashed line: empirical average. The dynamics observed for the mean-field equations is clearly recovered in the network simulations.}
	\label{fig:NetworkDeltaDelay}
\end{figure}

\subsubsection{Localized uniformly distributed delays}
We now consider instead of Dirac delta distributions of the delays that the neurons in each population have a dispersion of order $\delta$ around a typical value $\tau$. One of the simplest distribution modeling this kind of dispersion is the uniform distribution $\eta=\mathbbm{1}_{[\tau-\delta/2,\tau+\delta/2]} dt/\delta$. The Dirac delta delay can be seen as the limit of this distribution as $\delta \to 0$. It is easy to show that we have in that case a dispersion relationship:
\[\xi = -\frac{1}{\theta} + \frac{{J}g}{ \sqrt{2\pi(1+g^2\lambda^2/2)}} e^{-\xi\tau} \frac{e^{\xi\delta/2}-e^{-\xi\delta/2}}{\xi\delta}\]
and possible Hopf bifurcations occurring when there exists $\omega>0$ such that:
\[\mathbf{i}\omega = -\frac{1}{\theta} + 2\frac{{J}g}{ \sqrt{2\pi(1+g^2\lambda^2/2)}} e^{-\mathbf{i}\omega\tau} \frac{\sin({\omega\delta/2})}{\omega\delta}\]
Equating the real parts and the imaginary parts on both sides yields the system:
\[\begin{cases}
	\frac 1 \theta &= \frac{{J}g}{ \sqrt{2\pi(1+g^2\lambda^2/2)}} \frac{\sin(\Omega)}{\Omega} \cos(\omega\tau)\\
	\omega &= -  \frac{{J}g}{ \sqrt{2\pi(1+g^2\lambda^2/2)}} \frac{\sin(\Omega)}{\Omega}\sin(\omega\tau)
\end{cases}\]
where we denoted $\Omega=\omega\delta/2$. These equations are relatively hard to solve. But since we are interested in parameters $(\tau,\delta)$ corresponding to Hopf bifurcations, we can consider that these equations are self-consistent equations in the plane $(\tau,\delta)$ parametrized by the variable $\omega$ (or more precisely $\Omega$) and obtain the following equations on the Hopf bifurcation curves when $J<0$:
\[\begin{cases}
	\delta &= \displaystyle{\left(\frac 1 {4\Omega^2} \left(-\frac{1}{\theta^2} + \frac{{J}^2g^2}{ {2\pi(1+g^2\lambda^2/2)}}\frac{\sin^2(\Omega)}{\Omega^2}\right)\right)^{-\frac 1 2}}\\
	\\
	\tau &= \displaystyle{\frac{\pi-\arctan(\omega\theta)}{\omega}= \frac{\pi-\arctan(2\Omega \theta/\delta)}{2\Omega/\delta}}
\end{cases}\]

\begin{figure}[!h]
	\centering
		\subfigure[Bifurcation diagram $J=-2$, $\theta=1$]{\includegraphics[width=.25\textwidth]{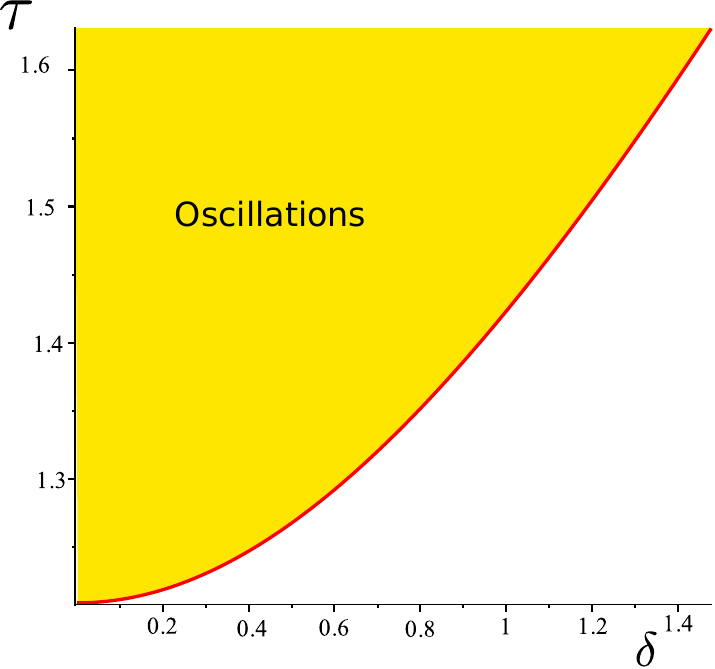}}\qquad
		\subfigure[Bifurcation diagram $J=-5$, $\theta=2$]{\includegraphics[width=.25\textwidth]{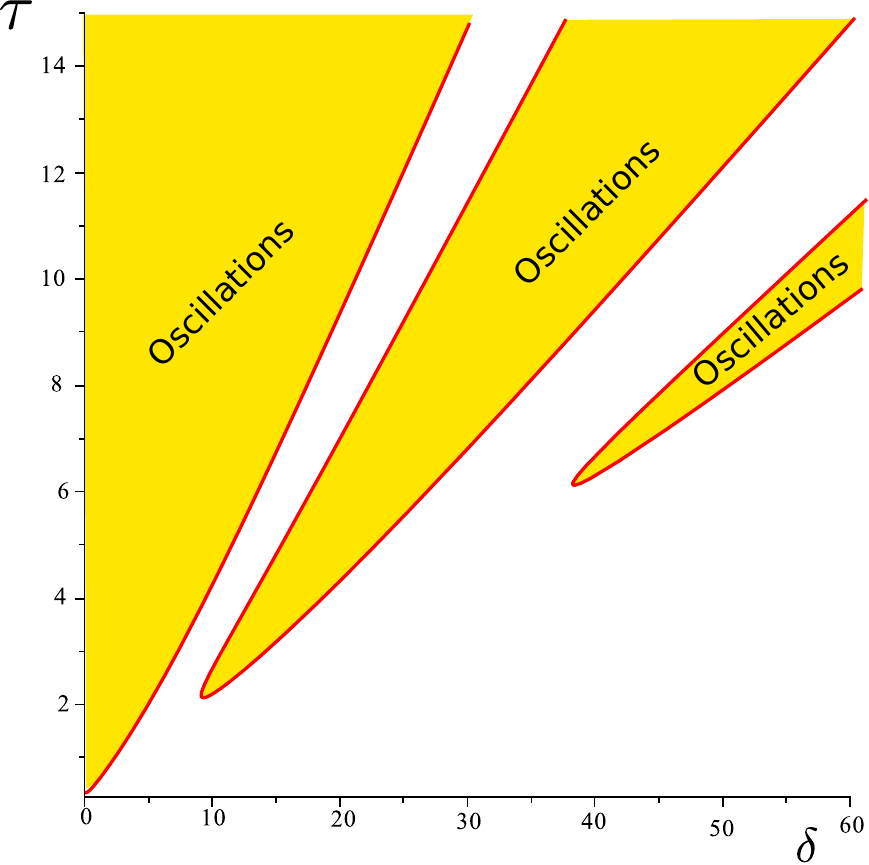}}\\
		\subfigure[$\tau=1.5$, $\delta=0.6$]{\includegraphics[width=.3\textwidth]{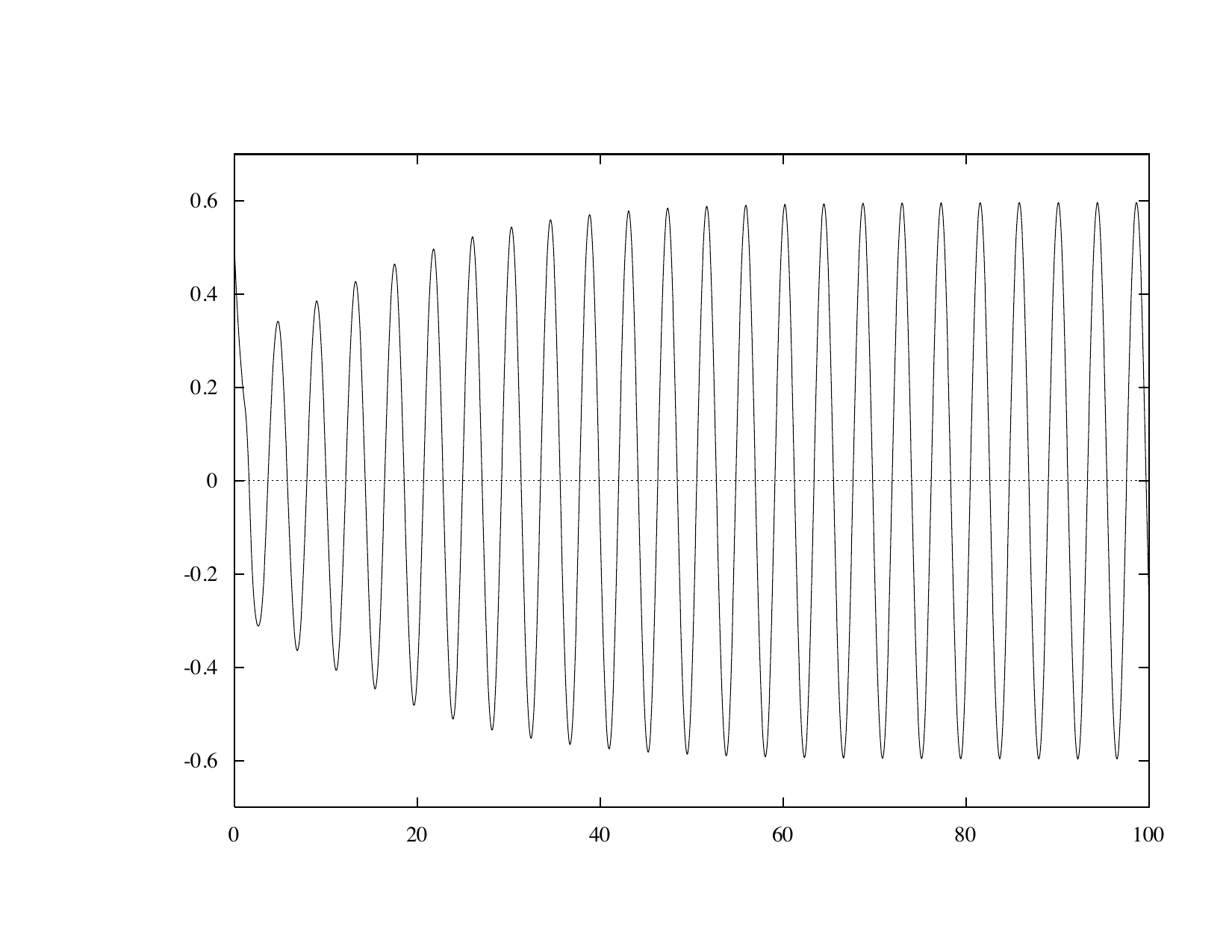}}
				\subfigure[$\tau=1.5$, $\delta=0.6$, network]{\includegraphics[width=.3\textwidth]{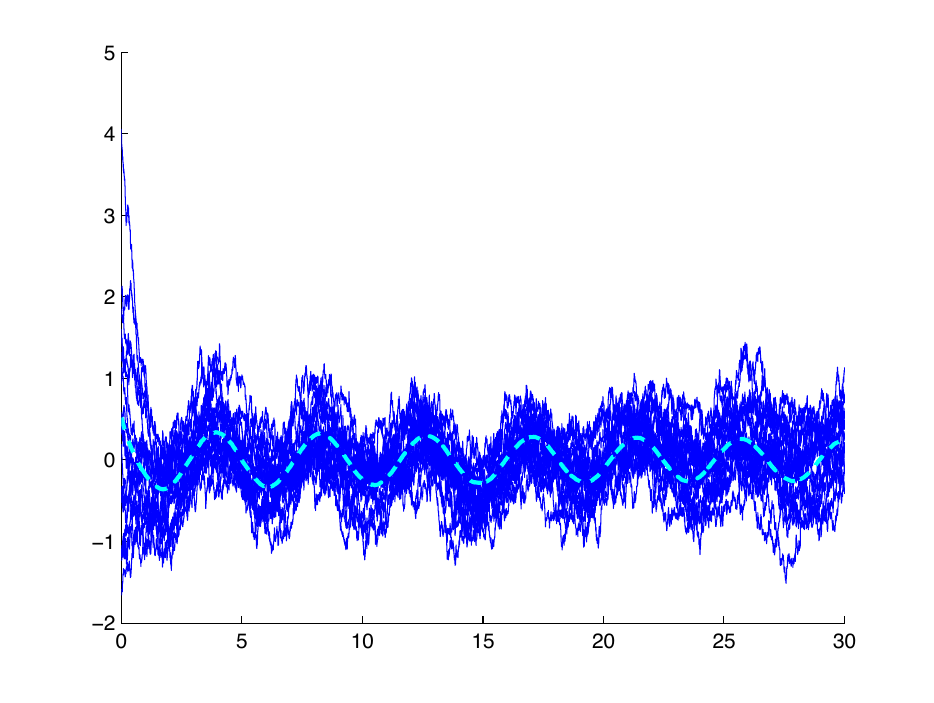}}\\
		\subfigure[$\tau=1.5$, $\delta=2$]{\includegraphics[width=.3\textwidth]{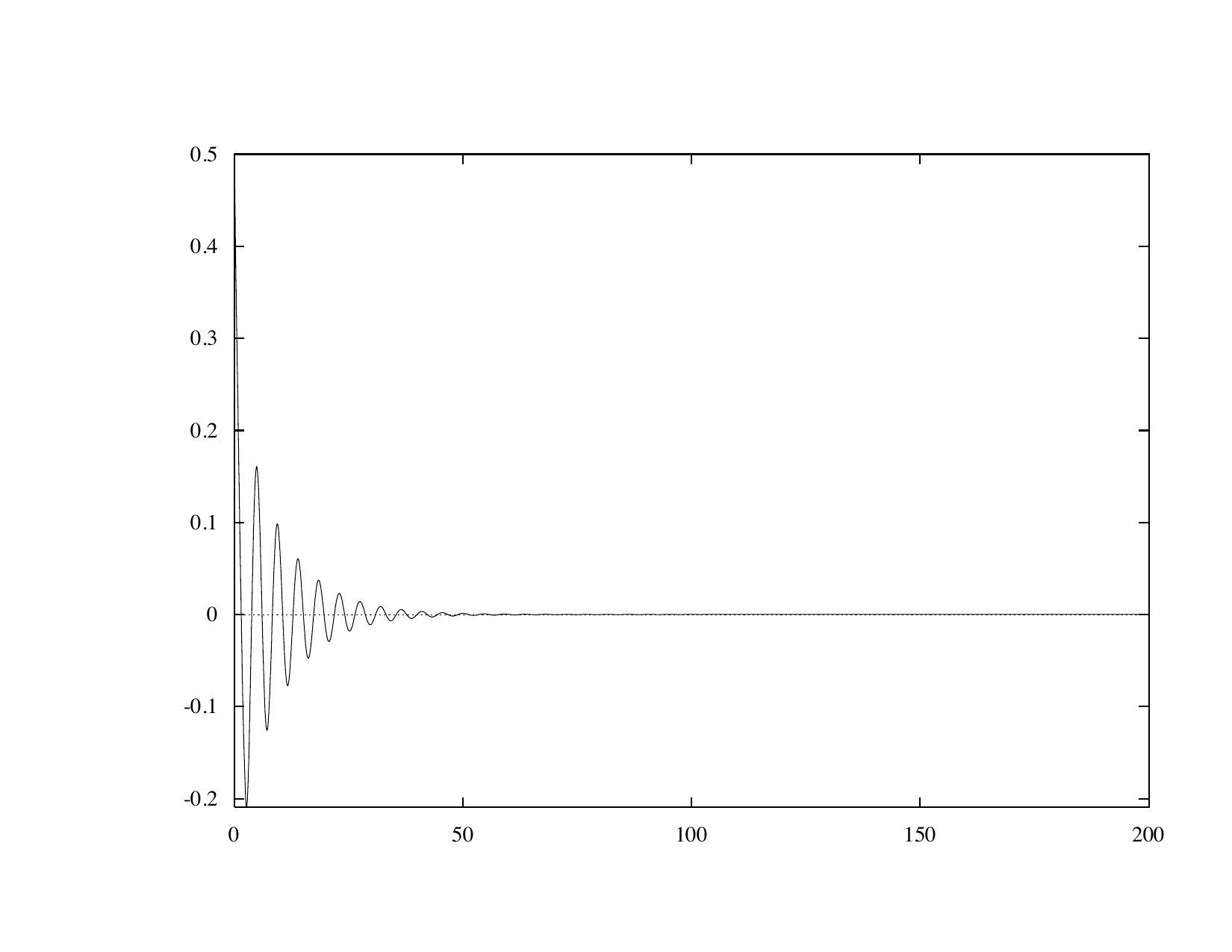}}
		\subfigure[$\tau=1.5$, $\delta=2$, network]{\includegraphics[width=.3\textwidth]{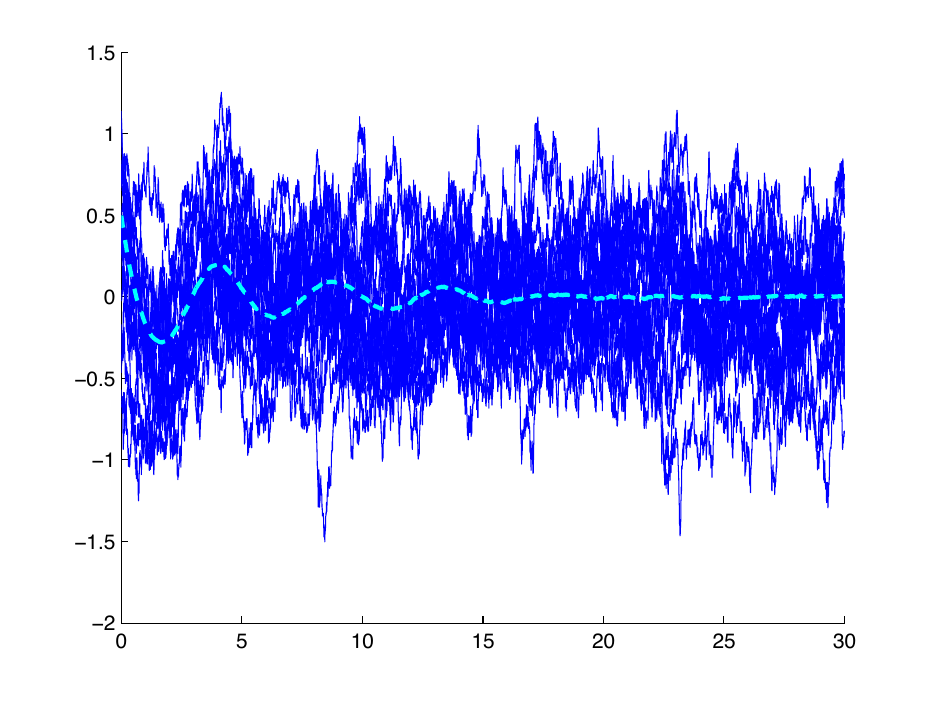}}
	\caption{Uniformly distributed delays in $[\tau-\delta/2,\tau+\delta/2]$ for $J=-2$ and $\theta=1$ (all but (b)) or $J=-5$ and $\theta=2$ (b), $g=1$ and $\lambda=1$. (a-b): Hopf bifurcations as a function of the spread of the distribution and the delay. Strange phenomena with multiple Hopf bifurcations can arise as $J$ is decreased (b). (c-f): for $\tau=1.5$, oscillations present in the Dirac-delta delayed network persist for small values of $\delta$. However, when $\delta$ exceeds the value related to the Hopf bifurcation, the oscillations disappear.}
	\label{fig:UniformDelays}
\end{figure}

These curves are plotted for different choices of parameters in figure Fig.~\ref{fig:UniformDelays}. For small absolute values of the connectivity coefficient $J$, the system presents a Hopf bifurcation curve which is an increasing function of $\delta$: the larger the dispersion, the larger the necessary delay to trigger oscillations. For instance fixing $J=-2$ and $\theta=1$ and a fixed value of the delay $\tau=1.5$, we show that increasing the dispersion of the delays $\delta$ destroys the oscillatory pattern: the network shows a transition from perfectly periodic in law synchronized oscillations to stationary asynchronous behaviors. For a different set parameters, we display a relatively complex bifurcation diagram presenting several parameter regions corresponding to the first Hopf bifurcation and hence to the presence of oscillations.

\subsubsection{Neurons in an interval}
Let us now further specify the type of distribution of delays that arises in neurosciences. As stated, delays are chiefly related to the axonal propagation at finite speed and to the specific time the signal transmission takes, $\tau_s$. Assuming constant speed $c$, the delay $\tau_{ij}$ between neuron $i$ at location $r_i$ and a neuron $j$ at location $r_j$ is $\tau_{ij} = \frac{\vert r_i-r_j \vert}{c} + \tau_s$. Delays are hence strongly related to the distribution of neurons in the cortex.  Moreover, in that case, the system may not be translation invariant (if we do not assume for instance periodicity) as we will see below. We hence need to identify the distribution of distances between distributed points in specific topologies. Such distributions are known in closed form for some simple topologies, such as the rectangle or the ball. But even in these favorable cases, the bifurcations appear harder to characterize.

This is however possible rigorously in a 1-dimensional case. Assuming uniform distribution of the neurons in an interval $[0,a]$ and that $c=1$ (without loss of generality), it is easy to show that the distribution of the distance $r$ between neuron $i$ at location $r_i$ and other neurons has the density:
\[\tilde{\eta}_i=\left(\mathbbm{1}_{[0,a-r_i]}+ \mathbbm{1}_{[0,r_i]}\right)\frac{dr}{a}\]
and the averaged law the density:
\[d\tilde{\eta}(r)=\left(\frac{2}{a}-\frac{2r}{a^2}\right)dr.\]

In the case where we consider a periodic neural area (i.e. identifying $0$ and $a$), the distribution of distances is clearly the uniform distribution on $[0,a/2]$ and the results of the previous section apply.

When considering non-periodic neural areas, applying the results of the above section ensures that an averaged propagation of chaos takes place, and that the network equations converge towards the mean-field equations given by theorem~\ref{thm:FiringRateGaussian}. A Gaussian process with mean $0$ and covariance $\lambda^2/2e^{-(t-s)/\theta}$ is a stationary solution, i.e. the point $(0,\lambda^2/2)$ is a fixed point of the moment equations~\eqref{eq:DDE}. At this point the dispersion relationship now reads, denoting $\bar{J}=\frac{\bar{J}g}{ \sqrt{2\pi(1+g^2\lambda^2/2)}}$:
\[\xi = -\frac{1}{\theta} + \frac{2\bar{J}}{\xi a}\left(1-\frac 1 {\xi a} -\frac{e^{-\xi a}}{\xi a}\right)e^{-\xi\tau_s}.\]
Possible Hopf bifurcations occur when one can find $\omega\in \R$ such that:
\[\mathbf{i}\omega = -\frac{1}{\theta} -\mathbf{i} \frac{2\bar{J}}{\omega a}\left(1+\mathbf{i} \left(\frac 1 {\omega a} +\frac{e^{-\mathbf{i}\omega a}}{\omega a}\right)\right)e^{-\mathbf{i}\omega\tau_s}.\]
In order to solve this equation, we note $\Omega=\omega\,a$ and $Z(\Omega)=\frac{2\bar{J}}{\Omega}\left(1+\mathbf{i} \left(\frac 1 {\Omega} +\frac{e^{-\mathbf{i}\Omega}}{\Omega}\right)\right)$. Remark that $Z(\Omega)$ tends to $1/2$ when $a\to 0$ hence recovering the Dirac delta delay case. Equating modulus and argument in the Hopf dispersion relationship one obtains:
\[\begin{cases}
	\displaystyle{a^2=\frac{\Omega^2}{-\frac{1}{\theta^2}+\vert Z(\Omega)\vert^2}}\\
	\displaystyle{\tau_s=\left(-\frac{\pi}{2} + Arg(Z(\Omega)) -\tan\left(\frac{\Omega\theta}{a}\right)+2k\pi\right)\frac{a}{\Omega}}.
\end{cases}\]
Considering as above $\Omega$ as a parameter, we can find the locus of the Hopf bifurcations in the plane $(a,\tau_s)$. We obtain a sequence of Hopf bifurcations indexed by $k$, and the first bifurcation is responsible for oscillations appearing in the system. Interestingly, we observe that the value of $\tau_s$ corresponding to oscillations takes a minimal value for a certain value of $a$ (see figure Fig.~\ref{fig:NeuronsBox}). This can be understood heuristically as follows. In the interval $[0,a]$, the mean distance between neurons is equal to $\frac a 3$ and the standard deviation to $\frac {a^2} 6$. Increasing $a$ hence has the effect of linearly increasing the averaged effective delay, and quadratically the dispersion. For small values of $a$, the effective delay created by the dispersion of the neuron increases faster than the dispersion, and therefore a smaller constant delay $\tau_s$ is necessary to trigger oscillations. As $a$ is further increased, the dispersion grows faster than the mean effective delay, and in that case the dispersion implies that larger delays are necessary to trigger oscillations, as seen in the uniformly distributed delays case.

\begin{figure}
	\centering
		\subfigure[Bifurcation diagram]{\includegraphics[width=.4\textwidth]{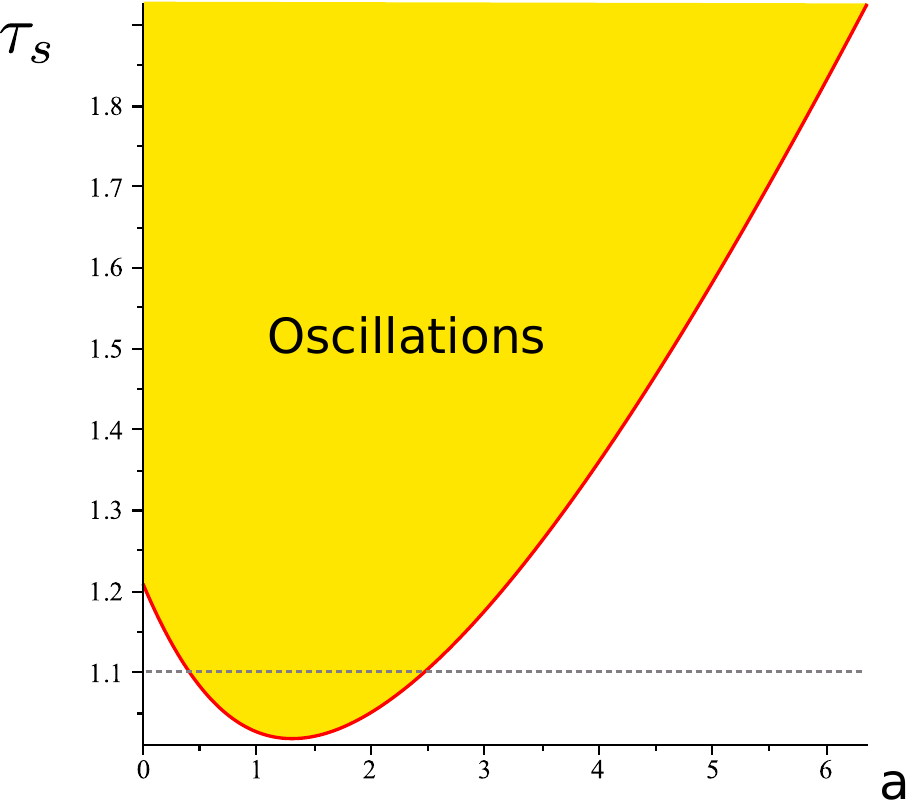}}\\
		\subfigure[$a=0.1$]{\includegraphics[width=.3\textwidth]{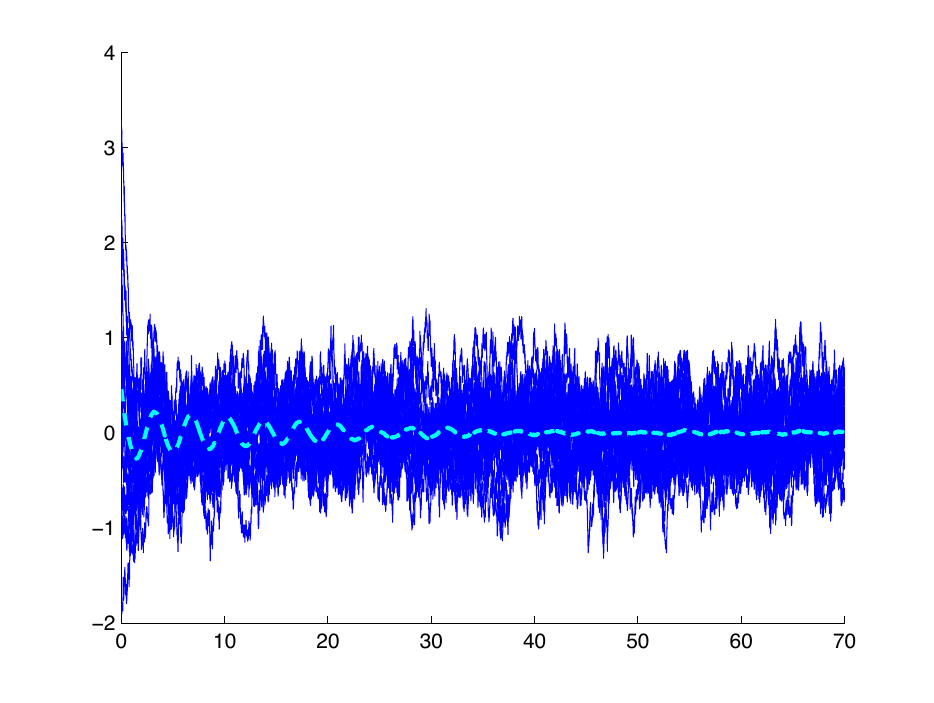}}
		\subfigure[$a=1.5$]{\includegraphics[width=.3\textwidth]{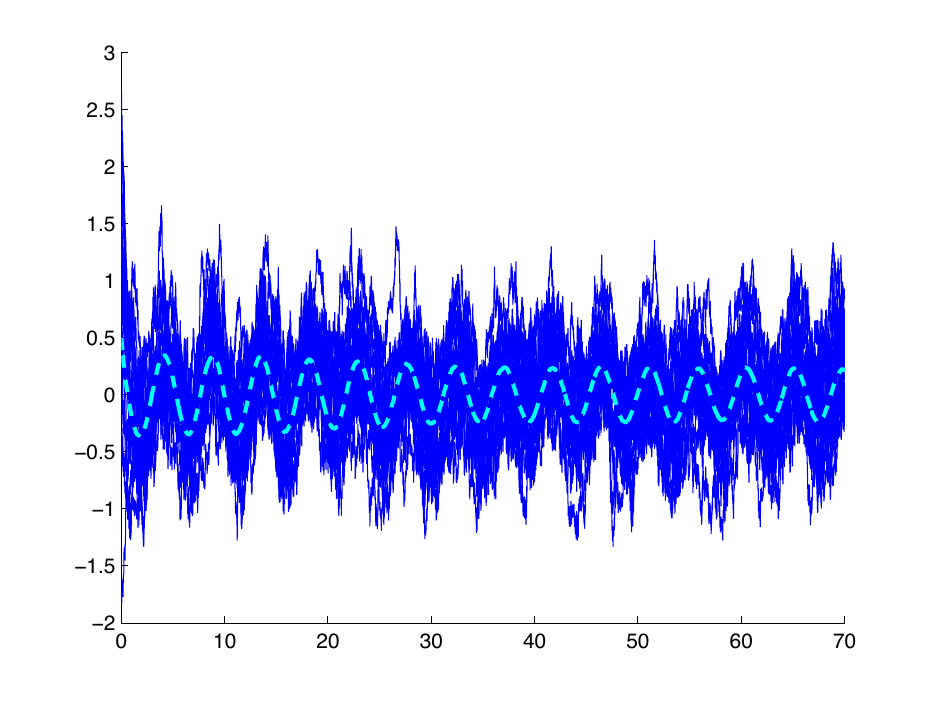}}
		\subfigure[$a=3.5$]{\includegraphics[width=.3\textwidth]{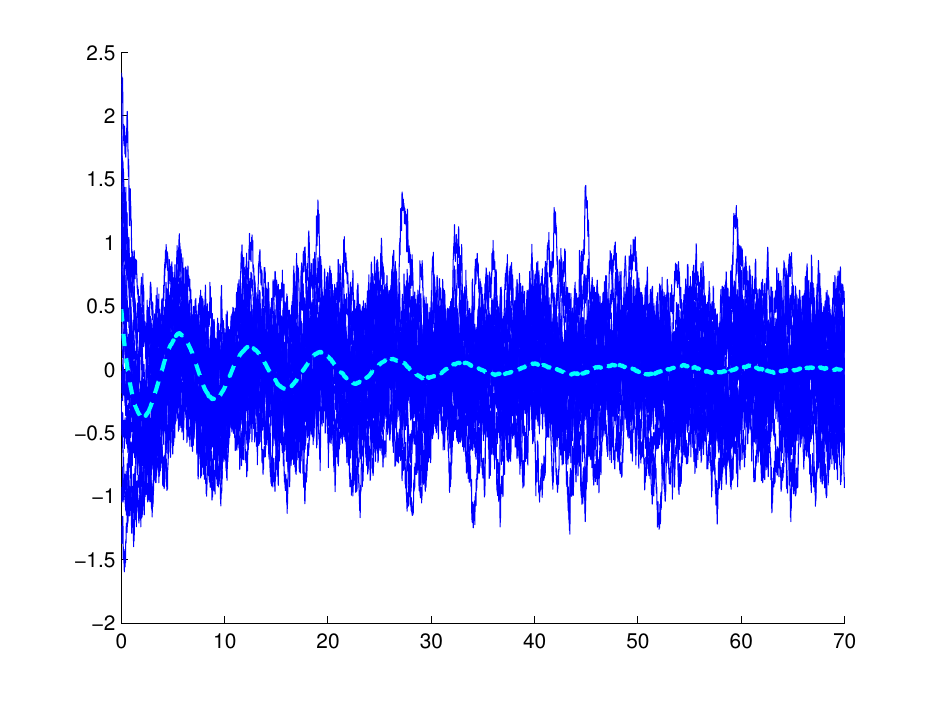}}
	\caption{Neurons uniformly distributed in $[0,a]$ inducing propagation delay and with transmission delay $\tau_s$ : (a) Hopf bifurcation diagram in the plane $(a,\tau_s)$ (same parameters as above). For $\tau_s=1.1$ (b-d), increasing the parameter $a$ (the size of the neural field) induces transitions from stationary to periodic back to stationary.}
	\label{fig:NeuronsBox}
\end{figure}

Fixing all parameters and changing the size of the interval in which the neurons lie can hence induces transition from stationary to periodic and back to stationary. Neurons are hence perfectly synchronized in a specific region of the neural field spatial extension, and otherwise asynchronous.

\section{Conclusion}
In this article, we analyzed the limits and dynamics of large-scale neuronal network models in the presence of noise. We particularly focused on the presence of heterogeneous delays in the interactions between neurons, an ubiquitous phenomenon in neuronal network activity that is understood to have important effects on the brain function. We considered two different kinds of systems: (i) translation invariant systems where the delays have the same distributions whatever the neurons considered, and (ii) non translation invariant case where delays distribution depend on the particular neuron $i$ considered.
In both cases, we derived the mean-field equations associated with such randomly delayed networks for general models, and showed that the propagation of chaos occurred, in a quenched sense for translation invariant systems (i.e. for almost all realization of the delays), or after averaging across possible delay distributions in the non translation invariant case.

In both cases, the limit equations are identical and given by a system of delayed McKean-Vlasov equations with distributed delays. The equations obtained are not classical stochastic equations: they involve a term depending on the law of the solution at previous times (due to the presence of distributed delays). We showed  that these equations were well-posed, and admit a unique solution. Nonetheless, this solution remains relatively hard to characterize at that level of generality. In order to characterize the dynamics of the network and in particular the importance of heterogeneity in the interconnection delays, we instantiated a popular model in neuroscience, the firing-rate model. In the mean-field limit, the solutions converge towards Gaussian processes, whose mean and variance satisfy a closed set of differential equations with distributed delays.

This property allowed, using the theory of bifurcations of delayed differential equations (see e.g.~\cite{hale-lunel:93}), to analyze in detail the role of delays in the macroscopic dynamics of neuronal networks. We showed that depending on the averaged delays, the network can either present a stationary or a synchronized periodic  behavior. Such transitions as a function of the amplitude of delays were already observed in heuristic models including constant delays. But beyond this fact, we showed that averaged delay is not enough to predict the behavior of the system, and a knowledge of the full delay distribution is necessary. We illustrated this fact exhibiting the bifurcation diagram of the Hopf bifurcations as a function of the variance of uniformly distributed delays, and probably more interestingly from the biological viewpoint the dependence of the network behavior as a function of the delays arising from the distance between the cells. We showed that the macroscopic behavior arising from such networks depend on the size of the neural area considered, that parametrizes the delay distributions in the network. This was done analytically in a simple one dimensional case, and observed that the size of the cortical column considered has non-trivial effects on the macroscopic dynamics of the neural area.

This observation is very interesting from the neuroscience viewpoint. Indeed, it has been observed that across different species, the cortex either organizes into columns or shows a dispersed distribution. This is for instance the case of the primary visual areas: mice and rats for instance present no specific organization of the neurons responsible for orientation selectivity (a so called salt-and-pepper structure), whereas the cat and primate present structured orientation selectivity columns (see e.g.~\cite{van-hooser-heimel-etal:05}). The reason of that difference is still poorly understood, and in that article, the authors suggest that the lack of orientation maps in rodent is related to the small size of V1 in these species. They argue that making a synaptic connection is easier locally. This possibility of creating synaptic connections is not a definitive answer to that question since the authors report a few counter-examples such as the ferret or the tree shrew. Including the delay distribution in this picture and how the topology and size of the cortex influence the synchronization property at the level of cortical columns could allow going deeper into the analysis of the reasons of the functional reliability of orientation selectivity across different species with sensitively different cortex size.

This work brings another refinement into the theoretical implication of delays in the dynamics of the network. The equations derived here provide a model of large-scale neuronal network with heterogeneous delay, and a preliminary study of the dynamics of such networks was initiated. Enriching this model by considering several populations is a straightforward extension and the present manuscript, and analyzing such models would allow going deeper into the analysis of neuronal networks and on the effect of heterogeneous delays in relationship with specific cortical functions. Eventually, the cortex is far from being the only system presenting heterogeneous distributions of interconnection delays. Most networks, including communication networks, internet, social networks and physical networks such as spin glass present delays in the communications, that depend on the specific distance between nodes. This study can hence be applied to these other cases, and applications of that framework to the TCP protocol is currently under investigation.

\begin{acknowledgements}
The author warmly thanks Tanguy Cabana for very interesting discussions and suggestions.
\end{acknowledgements}


\end{document}